\documentclass[
 reprint,
 superscriptaddress,
 amsmath,amssymb,
 aps,
 pra,
 10pt
]{revtex4-2}

\usepackage{graphicx}
\usepackage{dcolumn}
\usepackage{bm}
\usepackage{bbold}
\usepackage{varioref}
\usepackage[
colorlinks=true,
citecolor=blue,
urlcolor=blue,
linkcolor=blue
]{hyperref}
\usepackage[capitalise]{cleveref}
\usepackage[acronym]{glossaries}
\usepackage{braket}
\usepackage{booktabs}

\usepackage{amsthm} 
\usepackage{amsmath}
\usepackage{mathtools}

\usepackage[caption=false]{subfig}
\usepackage{xcolor}

\newtheorem{theorem}{Theorem} 
\newtheorem{lemma}{Lemma}[section]

\bibliographystyle{apsrev4-1}

\newcommand{\IBMYRK}{IBM Quantum, IBM T.J. Watson Research Center, Yorktown Heights, NY 10598, United States}
\newcommand{\IBMZRL}{IBM Quantum, IBM Research Europe - Zurich, CH-8803 Rüschlikon, Switzerland}
\newcommand{\EPFL}{Institute of Physics, École Polytechnique Fédérale de Lausanne (EPFL), CH-1015 Lausanne, Switzerland}
\newcommand{\ETH}{Institute for Theoretical Physics, ETH Zürich, CH-8093 Zürich, Switzerland}
\newcommand{\STFC}{The Hartree Centre, STFC, Sci-Tech Daresbury, Warrington, WA4 4AD, United Kingdom}
\newcommand{\MP}{Max Planck Institute for Solid State Research, Heisenbergstr. 1, 70569 Stuttgart, Germany}
\newcommand{\CAM}{Yusuf Hamied Department of Chemistry, University of Cambridge, Lensfield Road, Cambridge CB2 1EW, United Kingdom}

\newcommand\matH{\boldsymbol{\mathrm{H}}}
\newcommand\matS{\boldsymbol{\mathrm{S}}}
\newcommand\matV{\boldsymbol{\mathrm{V}}}
\newcommand\veck{\boldsymbol{\mathrm{k}}}

\begin{document}

\title{Quantum chemistry with provable convergence via \texorpdfstring{\\randomized sample-based Krylov quantum diagonalization}{}}

\author{Samuele Piccinelli}%
\email{samuele.piccinelli@ibm.com}
\affiliation{\IBMZRL}%
\affiliation{\EPFL}%
\author{Alberto Baiardi}%
\affiliation{\IBMZRL}%
\author{Stefano Barison}%
\affiliation{\IBMZRL}%
\affiliation{\ETH}%
\author{Max Rossmannek}%
\affiliation{\IBMZRL}%
\author{Almudena Carrera Vazquez}%
\affiliation{\IBMZRL}%
\author{Francesco Tacchino}%
\affiliation{\IBMZRL}%
\author{Stefano Mensa}%
\affiliation{\STFC}%
\author{Edoardo Altamura}%
\affiliation{\STFC}%
\affiliation{\CAM}%
\author{Ali Alavi}%
\affiliation{\MP}%
\affiliation{\CAM}%
\author{Mario Motta}%
\affiliation{\IBMYRK}%
\author{Javier Robledo-Moreno}%
\affiliation{\IBMYRK}%
\author{William Kirby}%
\affiliation{\IBMYRK}%
\author{Kunal Sharma}%
\affiliation{\IBMYRK}%
\author{Antonio Mezzacapo}%
\affiliation{\IBMYRK}%
\author{Ivano Tavernelli}%
\email{ita@zurich.ibm.com}
\affiliation{\IBMZRL}%

\date{\today}

\begin{abstract}
Quantum algorithms based on classical processing of individual samples have recently emerged as the most effective and robust methods to approximate ground-state wave functions of many-body quantum systems on pre-fault-tolerant and early-fault-tolerant quantum devices.
In these algorithms, the quantum computer acts as a sampling engine that generates the subspace in which the Hamiltonian is classically diagonalized.
The recently proposed Sample-based Krylov Quantum Diagonalization (SKQD), uses quantum Krylov states as circuits from which samples are collected.
Convergence guarantees can be derived for SKQD under similar assumptions to those of quantum phase estimation, provided that the ground-state wave function is well approximated by a polynomial subset of the full Hilbert space.
However, implementations of SKQD for complex many-body Hamiltonians, such as quantum chemistry ones, are limited by the depths of time-evolution circuits needed to generate Krylov vectors.
In this work, we introduce a method that combines SKQD with a qDRIFT randomized compilation of the Hamiltonian propagator.
The resulting algorithm, termed SqDRIFT, enables quantum chemistry experiments on quantum processors, while preserving the convergence guarantees similar to the phase estimation algorithm.
We demonstrate its viability by applying SqDRIFT to calculate the electronic ground-state energy of several polycyclic aromatic hydrocarbons, up to system sizes beyond the reach of exact diagonalization.
\end{abstract}

\maketitle

\section{Introduction}

Quantum computing has emerged as a novel computational paradigm capable of addressing specific problems that exhibit unfavorable scaling on classical computers.
Quantum chemistry is a prime candidate among these problems, and quantum computers are believed to be able to speed up the solution of the electronic Schr\"odinger equation within a given basis set, which is a central computational challenge in quantum chemistry~\cite{Cao2019_QuantumChemistry-QuantumComputing-Review,McArdle2020_QuantumComputChem,Motta2022_Review,Nutzel2024}.
Quantum algorithms for solving the Schr\"odinger equation with performance guarantees already exist, and are mostly based on Quantum Phase Estimation (QPE)~\cite{kitaev1995phaseestimation,kitaev2002computation,AspuruGuzik2005_QPE-Original,Reiher2017_FeMoCo,Lee2021_TensorFactorization,Goings2022_P450}.
Since QPE requires executing deep and complex quantum circuits, beyond the reach of current noisy quantum computers, alternative approaches tailored to near-term quantum processors have been actively investigated.
A prime example are variational algorithms such as the Variational Quantum Eigensolver (VQE)~\cite{Peruzzo2014_VQE,Kandala2017_VQE-Experiment,Tilly2022_VQE-Review}, where a parameterized quantum circuit is optimized classically, according to a cost function given by evaluating the expectation value of the Hamiltonian on a quantum computer.
However, the practical implementation of VQE faces scalability challenges due to a steep measurement overhead~\cite{Wecker2015_VQE-Measaurement} and the difficulty of navigating the optimization landscape~\cite{Larocca2025_BarrenPlateaus-VQE}.
Alternative strategies are required to scale-up quantum-chemical calculations on quantum computers beyond the reach of brute-force classical simulations -- the so-called ``quantum utility'' regime~\cite{Kim2023_Utility}.

Methods exploiting quantum-centric supercomputing platforms to sample classically-hard probability distributions -- and classically process those samples -- are emerging as a promising route towards this goal.
Examples of this class of methods for natural science applications are quantum algorithms inspired by classical selected configuration interaction (SCI)~\cite{Evangelisti1983_CIPSI,Holmes2016_HBCI,Schriber2017_AdaptiveCI,Zimmerman2017_I-FCI,Tubman2020_SelectedCI,Chilkuri2021_SelectedCI}. The extension of classical SCI methods to the quantum setting was proposed in the quantum-selected configuration interaction method (QSCI)~\cite{Kanno2023_QuantumSelectedCI} and in the sample-based quantum diagonalization (SQD), which enabled chemistry experiments up to $72$ spin orbitals using error mitigation techniques such as configuration recovery (CR)~\cite{RobledoMoreno2024_SQD-Original}.
In these approaches, one diagonalizes a many-body Hamiltonian in a subspace of Slater determinants generated by collecting samples from a quantum circuit.
As sampling from a quantum circuit is, in general, a computationally hard task, a quantum advantage over classical SCI can be expected, provided that accurate energies can be obtained with a sufficiently small diagonalization subspace.
This condition is met by ground states that are well approximated by wavefunctions supported over a small subset of the full Hilbert space.
In particular, the SQD workflow showcased complex molecular systems, with implementations successfully utilizing up to $77$ qubits~\cite{RobledoMoreno2024_SQD-Original,Barison2025_SQD-ExcitedStates,Danilov2025_SQD-AFQMC, kaliakin2024_supramolecular, Liepuoniute_SQD_triplet}.

Borrowing from VQE-based approaches, hardware-efficient variational circuit ans\"atze~\cite{Matsuzawa2020_UnitaryJastrow,Motta2023_LUCJ} can be used to this end.
However, this introduces a heuristic component as these ans\"atze are not guaranteed to sample the support of the exact ground-state wave function.

A circuit choice that eliminates the need for a variational ansatz are time-evolution circuits~\cite{Sugisaki2024,Mikkelsen2025_QuantumSelectedCI-Krylov,Yu2025_SKQD}, which are employed in Krylov subspace diagonalization methods~\cite{parrish_2019,epperly2021subspacediagonalization,motta2023subspace,Yoshioka2025_Krylov-Experiment,Stair2023}.
We will refer to these as Sample-based Krylov Quantum Diagonalization (SKQD) methods, following~\cite{Yu2025_SKQD}.
The key advantage of SKQD is that time-evolution circuits are guaranteed to sample the configurations on which the ground-state has a large support, provided that the initial state for the propagation has a sufficiently large overlap with the true ground state (and that the wave function is concentrated, as in any SQD-based method).
The same requirement must be met by any Krylov quantum diagonalization approach~\cite{epperly2021subspacediagonalization,motta2023subspace}, as well as by methods based on quantum phase estimation~\cite{Lee2023_ExponentialAdvantage-Chemistry}.

In addition to this formal advantage, SKQD can be implemented on current state-of-the-art quantum processors~\cite{Yu2025_SKQD} provided that the time-evolution circuits can be compiled to current hardware architectures without an excessive number of two-qubit gates.
Conversely, for generic quantum-chemical Hamiltonians, implementing even a single first-order Trotter step may yield circuits that are too deep to be reliably implemented on currently-available quantum computers~\cite{Mikkelsen2025_QuantumSelectedCI-Krylov}, with increasing system sizes.
Approximate compilation schemes can be employed to reduce the depth of implementing a single Trotter step.
This can be achieved, \emph{e.g.}, using matrix product operator (MPO)-based compression methods~\cite{Gibbs2025_MPO-Compression,Robertson2025_AQC-Tensor}, as proposed and experimentally demonstrated for quantum circuits of up to 36 qubits in Ref.~\cite{Sugisaki2024}.
However, tensor-network based compression techniques can face scalability challenges when the time evolution circuits generate volume-law entanglement. Hence, realizing SKQD on realistic quantum chemical simulations of larger molecular systems at the utility scale continues to pose a major challenge.

In this work, we fill this gap and present a practical, quantum-chemistry friendly variant of SKQD, which provides an alternative approach to circuit depth reduction with convergence guarantees and not limited by entanglement growth. Instead of compiling the time-evolution circuit using conventional Trotter formulas~\cite{Childs2021_Trotter}, we propose to leverage the qDRIFT randomized compilation strategy~\cite{Campbell2019}.
In qDRIFT, the time-evolution is realized through an ensemble of circuits, each one obtained by selecting randomly a subset of Hamiltonian terms, with a probability proportional to the corresponding Hamiltonian coefficient.
For sparse Hamiltonians, like the quantum-chemical ones, this compilation strategy yields asymptotically shallower circuits compared to Trotter-based formulas.
Here we propose to sample from the time-evolution circuits approximated by qDRIFT unitaries, and use those samples to perform classical subspace diagonalization. We refer to this approach as {\it SqDRIFT}.
By leveraging theoretical results on the asymptotic scaling of qDRIFT~\cite{Campbell2019,Campbell2022_qDRIFT-QPE}, we extend the convergence guarantees of SKQD~\cite{Yu2025_SKQD} to SqDRIFT.
This ultimately demonstrates that SqDRIFT can efficiently sample the support of concentrated ground-state wave functions with shallower circuits.

Besides this formal scaling analysis, we show, based on classical simulations and quantum computations, how SqDRIFT enables practical quantum-chemical SQD calculations on quantum computers.
Specifically, we calculate the ground-state energy of two polycyclic aromatic hydrocarbon (PAH) systems, namely naphthalene and coronene, using an active space that includes all $\pi$-type molecular orbitals.
Classical, noiseless simulations on naphthalene showcase that SqDRIFT can efficiently generate samples belonging to the support of the ground-state wave function.
Importantly, the sample accuracy increases with the qDRIFT circuit depth, hence indicating that SqDRIFT can trade circuit complexity for sampling overhead.
Moreover, we successfully execute the SqDRIFT protocol on state-of-the-art IBM Quantum processors for coronene, demonstrating that it can achieve accurate results on up to 48 qubits.

\section{Theory}

\subsection{Sample-based Krylov Quantum Diagonalization}

We review here the SKQD workflow introduced in~\cite{Yu2025_SKQD}. First, we observe that these quantum sampled-subspace methods work if the ground state of a target Hamiltonian is well-approximated by a concentrated wave function.
\emph{Concentrated} means that the wave function is supported over a small portion of the Hilbert space.
More precisely, an $n$-qubit wave function $\ket{\Psi}$ is \emph{$(\alpha_L, \beta_L)$-concentrated}~\cite{Yu2025_SKQD} if it can be expressed as:
\begin{equation}
  \ket{\Psi} = \sum_{i=1}^{2^{n}} c_i \ket{b_i} \, ,
  \label{eq:ConcentratedWaveFunction}
\end{equation}
where $\ket{b_i}$ are ordered $n$-qubit bitstrings such that ${\left\| c_1 \right\|\geq \ldots\geq \left\| c_{2^n} \right\|}$, satisfying
\begin{align}
  \sum_{i=1}^L \lVert c_i\rVert^2 
    &\ge \alpha_L
  &&\text{and}\; 
  &\lVert c_i\rVert^2
    &\ge \beta_L 
  &\forall i \le L\, .
  \label{eq:ConcentrationConstraint}
\end{align}
We consider a wavefunction to be \emph{concentrated} if it is $(\alpha_L, \beta_L)$-concentrated with $L=\text{poly(n)}$ for $n$ qubits, ${1-\alpha_L\ll1}$, and $\beta_L=\Omega(1/\text{poly}(n))$: this makes precise the notion that the wavefunction is approximately supported on those $L$ bitstrings.

The bitstrings $\ket{b_i}$ yielding the largest contribution to the ground-state wave function in \cref{eq:ConcentratedWaveFunction} are identified by sampling from a quantum circuit.
The Hamiltonian is then projected and diagonalized classically in the subspace generated by the sampled bitstrings, from which one obtains an approximation of the ground state energy and of the corresponding wavefunction supported in the subspace itself. 
The concentration hypothesis guarantees that accurate approximations are reachable without employing an exponentially large subspace, i.e., with manageable classical computational resources.

The key challenge is to construct a quantum circuit that is at the same time easy to implement on quantum processors and guaranteed to sample the important bitstrings efficiently.
Ideally, the circuit should generate a constant probability distribution over the important bitstrings $\ket{b_i}$ of \cref{eq:ConcentratedWaveFunction}.
This minimizes the sampling overhead needed to extract all $\ket{b_i}$ bitstrings.
Notably, a circuit representing the ground-state wave function $\ket{\phi_0}$ may not generate such a distribution, since a given bitstring $\ket{b_i}$ will be sampled with probability $\left| \langle \phi_0 \vert b_i \rangle \right|^2$, which can be far from uniform over the important bitstrings.
Hence, conventional ans\"atze tailored to encode ground-state wave functions may not yield good trial states for sample-based approaches.
While we may optimize the ansatz based on the variational principle in order to improve the quality of the simulation, this is not guaranteed.
For this reason, identifying good set of non-variational trial states is desirable.
As we will discuss below, time-evolution circuits constitute such a set.
These states are used in SKQD~\cite{Yu2025_SKQD}, which is a direct extension of Krylov Quantum Diagonalization (KQD) methods~\cite{parrish_2019,Motta2024_Subspace,Yoshioka2025_Krylov-Experiment}.
We review both variants in the next section.

Consider an $n$-qubit Hamiltonian $H$ with eigenvalues $E_0\leq E_1\leq\ldots$ and exact ground state $|\phi_0\rangle$.
Given a reference wavefunction $|\psi_0\rangle$ and reference evolution time $t$, the Krylov subspace $\mathcal{K}_d$ of order $d$ (here and in the following, we will assume that $d$ is odd)~\cite{Saad1989_Krylov} is constructed as the linear space spanned by the vectors

\begin{equation}
  |\psi_k\rangle = \mathrm{e}^{-\mathrm{i}kHt}|\psi_0\rangle \, ,
  \label{eq:krylov_vectors}
\end{equation}
for $k\in\{0,1,\ldots,d-1\}$.
By projecting $H$ in $\mathcal{K}_d$, the task of approximating the ground state of the target Hamiltonian is then reduced to solving the generalized eigenvalue problem

\begin{equation}
  \matH v = \tilde{E}\matS v \, .
  \label{eq:GSE-Krylov}
\end{equation}

Here, $\matH_{ij} = \langle \psi_i | H |\psi_j\rangle $,  $\matS_{ij} = \langle \psi_i | \psi_j \rangle$ and each vector $v$ defines a quantum state $|\phi\rangle = \sum_k v_k |\psi_k\rangle \in \mathcal{K}_d$.
All the matrix elements of $\matH$ and $\matS$ can, in principle, be evaluated using a quantum computer~\cite{Yoshioka2025_Krylov-Experiment} provided that one is able to implement the unitary evolution operators $U_k(t)=\mathrm{e}^{-\mathrm{i}kHt}$ with arbitrary precision. In general, however, terms of the form $\langle \psi_i | H |\psi_j\rangle$ with $i\neq j$ require quantum controlled operations, which are challenging to compile and realize on noisy quantum processors with local connectivity.

SKQD, a sample-based alternative, bypasses this limitation by combining elements from both the SQD and KQD pipelines: here, instead of forming matrix elements between the time-evolved states, one samples from them.
These time-evolved states are
\begin{equation}
  \ket{\Psi_k} = \left( \prod_{j=1}^k \mathrm{e}^{-\mathrm{i}Ht} \right)
  \ket{\Psi_\text{init}} = \mathrm{e}^{-\mathrm{i}Hkt} \ket{\Psi_\text{init}} \, ,
  \label{eq:TrotterCircuits}
\end{equation}
where the initial reference wavefunction $\ket{\Psi_\text{init}}$ should be chosen such that it is easy to prepare on a quantum computer and has a non-negligible overlap with $|\phi_0\rangle$. Samples are collected from the circuits preparing $\ket{\Psi_k}$ for different $k$ values, they are merged together, and $H$ is diagonalized in the resulting subspace. Remarkably, it was proven in Ref.~\cite{Yu2025_SKQD} that, if ground state concentration assumptions are met, SKQD is guaranteed to efficiently yield accurate solutions in the asymptotic limit.

To implement SKQD -- and thus benefit from its unique combination of convergence guarantees and affordable experimental requirements -- suitable unitary compilation strategies for time-evolution operators are needed.
However, while standard approaches (for instance, Suzuki-Trotter product formulas~\cite{miessen2023quantum}) work well for example for spin and impurity models~\cite{Yu2025_SKQD}, they become quickly impractical when targeting molecular systems, particularly on noisy quantum processors, because they yield very deep circuits.
In this work, we aim at enabling quantum chemistry SKQD calculations by adopting qDRIFT, a randomized time-evolution approach~\cite{Campbell2019}. 

\subsection{The qDRIFT compilation protocol}

The quantum stochastic drift (qDRIFT) protocol~\cite{Campbell2019} aims at compiling the time-evolution operator $\mathrm{e}^{-\mathrm{i} H t}$ for Hamiltonians expressed as
\begin{equation}
  H = \sum_{i=1}^{\mathcal{N}} c_i h_i \, ,
  \label{eq:Hamiltonian_Decomposition}
\end{equation}
where, without loss of generality, we require $c_i > 0$ and that the largest eigenvalue of $h_i$ be equal, in absolute value, to $1$.
Note that \cref{eq:Hamiltonian_Decomposition} includes, as a special case, the decomposition of $H$ in the Pauli basis (i.e., in terms of Pauli strings).
However, other choices are possible such as, e.g., mapping each $h_i$ term to a string of fermionic second-quantization operators.
The qDRIFT protocol relies on a subroutine that constructs, for a target time $t$, the unitary operator $V_{\bm{k}}$ defined as
\begin{equation}
  V_{\bm{k}} = \prod_{j=1}^N \mathrm{e}^{- \mathrm{i} h_{k_j} t \lambda / N} \, ,
  \label{eq:qDRIFT_Unitary}
\end{equation}
where $\lambda = \sum_i c_i$, and the series of indices $\bm{k} = (k_1, \ldots, k_N)$ is obtained by randomly sampling terms $h_i$ from the distribution defined by $c_i/\lambda$.
The average of the channel defined in \cref{eq:qDRIFT_Unitary} over all possible indices $\bm{k}$, i.e.,
\begin{equation}
  \mathcal{E}_\text{qDRIFT}[\rho] = \sum_{\bm{k}} p_{\bm{k}} V_{\bm{k}} \rho V_{\bm{k}}^\dagger \, ,
  \label{eq:Average_qDRIFT_Channel}
\end{equation}
where $p_{\bm{k}} = \lambda^{-N}\prod_{i=1}^{N} c_{k_i}$ is the probability of sampling the $\bm{k}$ indices, yields an approximation to the exact time-evolution channel $\mathcal{U}_t[\cdot] = \mathrm{e}^{\mathrm{i} H t}(\cdot)\mathrm{e}^{-\mathrm{i} H t}$.
As proven in Ref.~\cite{Campbell2019}, the approximation error $\epsilon$ grows as $\mathcal{O}(\lambda^2 t^2/N)$.
Notably, $\epsilon$ does not depend on the number of terms $\mathcal{N}$ appearing in \cref{eq:Hamiltonian_Decomposition}, but rather on the $L_1$-norm of $H$.
Hence, qDRIFT yields an improvement over conventional Trotter formulas for sparse Hamiltonians, whose norm does not depend on (or changes very slowly with) the system size.

We also recall that, as discussed in Ref.~\cite{Huang2021_qDRIFT-Concentration}, even a single term of the sum appearing in \cref{eq:Average_qDRIFT_Channel} yields an accurate approximation of the time-evolution operator with high probability.
However, in this case the number of terms $N$ required to obtain a given target accuracy $\epsilon$ increases quadratically compared to the original qDRIFT protocol outlined above. \\

Before moving forward, it is interesting to analyze convergence guarantees and error bounds for KQD implemented using qDRIFT, as these will pave the way towards our main results.
Specifically, in \cref{thm:KQD_QDRIFT} in \cref{appendix:proof_thm_krylov} we prove that if all states $\{|\psi_k\rangle\}$ defined in \cref{eq:krylov_vectors} are prepared using $N_r$ qDRIFT randomizations of length $N$, then, for any $\delta > 0$, \cref{eq:GSE-Krylov} yields an approximate ground state energy $\tilde{E}$ satisfying
\begin{equation}
        \tilde{E}-E_0 \leq \xi
\end{equation}
 with probability at least $1-\delta$, 
where
\begin{align}
    \xi = \frac{\chi}{|\gamma^\prime_0|^2} + \frac{6||H||}{|\gamma_0^\prime|^2} & \left(\frac{2\chi}{\Delta^\prime} + \zeta + 8\left(1+\frac{\pi\Delta^\prime}{4||H||}\right)^{-2d + 1}\right)
    \label{eq:xi-krylov}
\end{align}
and 
\begin{align}
    \chi&\leq 2\epsilon_Q ||H||,\\
    \zeta &\leq 2d(\epsilon_R + \epsilon_Q),\\
    |\gamma^\prime_0|^2 & \geq |\gamma_0|^2 - 2\epsilon_R - 2\epsilon_Q\,,
\end{align}
with
\begin{equation}
        \epsilon_Q = d(d-1)t\lambda\left(\frac{t(d-1)\lambda}{N} + \sqrt{\frac{11\ln(2^{n+1}/\delta)}{NN_r}}\right)\,.
        \label{eq:epsilonQ}
\end{equation}
Here, $\epsilon_R$ is a regularization threshold~\cite{kirby2024analysis}, $|\gamma_0|^2$ is the overlap between $|\psi_0\rangle$ and the true ground state, ${\Delta^\prime = \Delta - \chi/|\gamma^\prime_0|^2}$ is a rescaled version of the spectral gap ${\Delta = E_1-E_0}$ and the evolution time is set to ${t=\pi / (E_{2^n - 1}-E_0)}$. Notice that $\delta$ controls the tradeoff between the success probability ($1-\delta$) and energy error (Eq.~\ref{eq:epsilonQ}).

This result shows that the quality of the solutions obtained via qDRIFT-based Krylov diagonalization can be systematically and efficiently (i.e., at a polynomial cost in terms of quantum 
resources) improved by increasing the Krylov dimension $d$ and the length of qDRIFT sequences $N$.

\subsection{The SqDRIFT algorithm}

\begin{figure*}[t]
    \centering
    \includegraphics{Figures/SqDRIFT-scheme.pdf}
    \caption{\textbf{SqDRIFT quantum-centric supercomputing workflow}. The diagram is to be read from top to bottom, left to right. Processes are indicated above the arrows, while results are enclosed in boxes. The classical or quantum parts of the pipeline are indicated by the top-left icons. For details on the various steps, see the main text.}
    \label{fig:QCSC_scheme}
\end{figure*}

We are now ready to present our main theoretical contributions.
We begin by introducing SqDRIFT, an algorithm that combines SKQD and qDRIFT into a hardware-friendly, sample-based and provably efficient protocol.
In SqDRIFT we compile the time-evolution circuit associated with the propagation time $t_k = kt$ by sampling $N_r$ qDRIFT randomized sequences, and collecting $S$ bitstrings from each realization.
The union of the overall $N_r \times S$ bitstrings defines the set of samples associated with time $t_k$.
The algorithm then proceeds as in conventional SKQD, i.e., the target Hamiltonian is diagonalized in the subspace defined by the union of the samples collected for all time-steps.
A scheme of the SqDRIFT pipeline is shown in \cref{fig:QCSC_scheme}. \\

Thanks to its close relationship with Krylov methods, we can immediately derive rigorous analytical guarantees for SqDRIFT.
Notably, these results extend the analogous ones holding for SKQD~\cite{Yu2025_SKQD}, incorporating the effect of unitary compilation errors~\cite{epperly2021subspacediagonalization,kirby2024analysis} and finite sampling statistics. 

\begin{theorem}[\textbf{SqDRIFT convergence guarantees}]\label{thm:sqdrift_main}
Let $H = \sum_i c_i h_i$ with $\lambda = \sum_i |c_i|$ be an $n$-qubit Hamiltonian whose ground
state $|\phi_0\rangle$ is $(\alpha^{(0)}_L, \beta^{(0)}_L)$-concentrated, and let $|\tilde{\phi}\rangle$ be the
lowest energy state supported on the $L$ important bitstrings in $|\phi_0\rangle$. Then if all the important $L$ bitstrings are successfully sampled in the SqDRIFT protocol, the resulting ground state energy estimate will have an error 
\begin{equation}
    \langle \tilde{\phi}|H |\tilde{\phi}\rangle - \langle \phi_0|H |\phi_0\rangle \leq \sqrt{8}||H||\left(1-\sqrt{\alpha_L^{(0)}}\right)^{1/2}.
    \label{eq:sqdrift-energy-error}
\end{equation}
By using $N_r$ qDRIFT randomizations of length $N$, and by taking $S$ samples from each of these realizations, the failure probability (i.e. of failing to find all $L$ important bitstrings) is bounded by
\begin{equation}
    p_\text{fail}\leq L \left((1-\delta)\left(1-p\right)^S + \delta\right)^{N_r}
    \label{eq:fail-qdrift}
\end{equation}
for any $\delta>0$, where
\begin{align}
    p = \left(\frac{|\gamma_0|\sqrt{\beta_L}}{d} - \epsilon\right)^2\,,
\end{align}
with
\begin{equation}
    \epsilon = \frac{t^2\lambda^2}{N} + t\lambda\sqrt{\frac{11\ln(2^{n+1}/\delta)}{N}}\,,
    \label{eq:qdrift-epsilon}
\end{equation}
and
\begin{equation}
    \beta_L = \beta_L^{(0)} - 2\sqrt{2}\sqrt{1-\sqrt{1-\xi/\Delta }}
\end{equation}
Here, $\Delta = E_1 - E_0$ is the energy gap between the ground and first excited states, $t=\pi / (E_{2^n - 1}-E_0)$ is the evolution time chosen to construct the Krylov vectors, see \cref{eq:krylov_vectors}, and $\xi$ is given by \cref{eq:xi-krylov}. 
\end{theorem}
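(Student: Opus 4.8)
The plan is to establish the two halves of the theorem independently—the deterministic energy-error bound \cref{eq:sqdrift-energy-error} and the probabilistic sampling bound \cref{eq:fail-qdrift}—reusing the SKQD analysis of Ref.~\cite{Yu2025_SKQD} wherever the argument is insensitive to how the time evolution is compiled, and then to derive the effective $\beta_L$ from the qDRIFT--Krylov energy error $\xi$ of \cref{thm:KQD_QDRIFT}. For the energy bound I would argue variationally. Let $P$ be the orthogonal projector onto the $L$ important bitstrings and $|\phi_P\rangle = P|\phi_0\rangle/\lVert P|\phi_0\rangle\rVert$; since $|\phi_P\rangle$ lies in the diagonalization subspace, Rayleigh--Ritz gives $\langle\tilde\phi|H|\tilde\phi\rangle \le \langle\phi_P|H|\phi_P\rangle$. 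The concentration constraint \cref{eq:ConcentrationConstraint} yields $|\langle\phi_P|\phi_0\rangle| = \lVert P|\phi_0\rangle\rVert = (\sum_{i\le L}\lVert c_i\rVert^2)^{1/2} \ge \sqrt{\alpha_L^{(0)}}$, so, fixing the relative phase, the Euclidean distance obeys $\lVert\,|\phi_P\rangle - |\phi_0\rangle\,\rVert = \sqrt{2(1-|\langle\phi_P|\phi_0\rangle|)} \le \sqrt{2(1-\sqrt{\alpha_L^{(0)}})}$. Bounding the energy difference by $2\lVert H\rVert$ times this distance (the triangle inequality together with $|\langle u|H|v\rangle|\le\lVert H\rVert$ for unit vectors $u,v$) reproduces the constant $\sqrt8$ and hence \cref{eq:sqdrift-energy-error}.

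For \cref{eq:fail-qdrift} I would assemble the bound from three nested steps. First, a union bound reduces the problem to a single important bitstring: $p_\text{fail} \le \sum_{i=1}^L \Pr[b_i\ \text{never sampled}]$. Second, I would condition each length-$N$ qDRIFT sequence on whether it is \emph{good}---meaning it reproduces the ideal Krylov amplitude on $b_i$ up to the compilation error $\epsilon$ of \cref{eq:qdrift-epsilon}---or \emph{bad}. Bounding the bad branch trivially by $1$ and using that a good sequence fails to produce $b_i$ in $S$ independent shots with probability at most $(1-p)^S$ gives, for one sequence, $\Pr[b_i\ \text{not sampled}] \le (1-\delta)(1-p)^S + \delta$; the elementary inequality $(q-1+\delta)\bigl[(1-p)^S-1\bigr]\le 0$, valid whenever the good-event probability $q \ge 1-\delta$, certifies that this worst-casing over the bad branch is legitimate. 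Third, multiplying over the $N_r$ independent sequences and reinstating the union bound produces exactly \cref{eq:fail-qdrift}.

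The crux---and the step I expect to be hardest---is supplying the two quantities $\delta$ and $p$ that feed this combinatorial skeleton. The value of $\delta$ together with the single-sequence error $\epsilon$ of \cref{eq:qdrift-epsilon} should follow from the concentration result for one qDRIFT realization~\cite{Huang2021_qDRIFT-Concentration}: with probability at least $1-\delta$, a single length-$N$ product $V_{\bm k}$ reproduces the amplitudes of $\mathrm{e}^{-\mathrm{i}Hkt}$ up to the deterministic bias $t^2\lambda^2/N$ plus the fluctuation $t\lambda\sqrt{11\ln(2^{n+1}/\delta)/N}$, the $2^{n+1}$ arising from a union bound over computational-basis amplitudes. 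Conditioned on such a good sequence, I would invoke the SKQD sampling lemma of Ref.~\cite{Yu2025_SKQD}, which guarantees that exact Krylov evolution places amplitude at least $|\gamma_0|\sqrt{\beta_L}/d$ on each important bitstring; subtracting the compilation error and squaring then yields the single-shot success probability $p = \bigl(|\gamma_0|\sqrt{\beta_L}/d - \epsilon\bigr)^2$. The delicate point is propagating the amplitude-level perturbation $\epsilon$ through the SKQD argument without its being amplified by the generalized-eigenvalue problem or the regularization threshold, so that it enters linearly inside the square as written.

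Finally, the effective concentration parameter entering $p$ must be that of the state actually reachable by the qDRIFT-compiled Krylov procedure rather than of $|\phi_0\rangle$ itself. I would bound the infidelity of the returned state against $|\phi_0\rangle$ by the energy error through the gap, $1 - F' \le \xi/\Delta$ with $\xi$ taken from \cref{eq:xi-krylov}, convert this into a Euclidean state distance $\eta \le \sqrt2\,\sqrt{1-\sqrt{1-\xi/\Delta}}$, and then use $|\tilde c_i| \ge |c_i| - \eta$ with $|c_i|\le 1$ to obtain $|\tilde c_i|^2 \ge \beta_L^{(0)} - 2\eta$. This is precisely the stated $\beta_L = \beta_L^{(0)} - 2\sqrt2\,\sqrt{1-\sqrt{1-\xi/\Delta}}$, closing the chain from the Krylov energy guarantee to the sampling success probability.
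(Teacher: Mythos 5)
Your proposal is correct and follows essentially the same route as the paper's proof: the same decomposition into a variational energy bound (the paper cites Lemma~5 of Ref.~\cite{Yu2025_SKQD}, whose Rayleigh--Ritz-plus-continuity argument you reproduce inline, including the constant $\sqrt{8}$), the same good/bad conditioning on each qDRIFT sequence with $\epsilon$ and $\delta$ supplied by the concentration results of Ref.~\cite{Huang2021_qDRIFT-Concentration}, the same per-bitstring amplitude lower bound $|\gamma_0|\sqrt{\beta_L}/d$ from the SKQD sampling lemma, the same union bound and product over $N_r$ independent randomizations, and the same gap-based degradation of $\beta_L^{(0)}$ via the Krylov energy error $\xi$ (your explicit form matches the theorem statement, where the paper's appendix writes it as $2\sqrt{\tilde{\xi}}$ with $\tilde{\xi}=O(\xi/\Delta)$). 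The only quibbles are cosmetic: the $2^{n+1}$ factor arises from the dimension dependence of the matrix concentration inequality rather than a union bound over basis amplitudes, and your worry about $\epsilon$ being amplified by the generalized eigenvalue problem is unfounded since that machinery enters only through $\xi$ and hence $\beta_L$, exactly as your own structure keeps it.
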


\begin{proof}
    A complete proof of \cref{thm:sqdrift_main} is given in \cref{appendix:proof_sqdrift_main}. Here we report for completeness a sketch of the main ideas. The first few steps, as well as the last one, closely follow the analogous proof given in Ref.~\cite{Yu2025_SKQD} for SKQD. \\
    \textbf{Step 1.} \cref{thm:KQD_QDRIFT}, proven in \cref{appendix:proof_thm_krylov}, implies that within the Krylov subspace constructed via qDRIFT we can find a state $|\psi\rangle$ approximating the exact ground state as
    \begin{equation}
        ||\ket{\psi} - \ket{\phi_0}||^2 \leq \tilde{\xi}= O\left(\frac{\xi}{\Delta E_1}\right)\,.
    \end{equation}
    \textbf{Step 2.} If $\ket{\phi_0}$ exhibits $(\alpha_L^{(0)},\beta_L^{(0)})$-concentration, then
    $\ket{\psi}$ is $(\alpha_L,\beta_L)$-concentrated with
    \begin{align}
        \alpha_L &= \alpha_L^{(0)}  -2\sqrt{\tilde{\xi}}&&\text{and}& \beta_L &= \beta_L^{(0)}  -2\sqrt{\tilde{\xi}}\,.
    \end{align}
    \textbf{Step 3.} After writing the ideal $k$-th Krylov state in the computational basis as $\ket{\psi^k}=\sum_{j=1}^N \sqrt{p^k(b_j)}\ket{b_j}$ for each $k=0,\ldots,d-1$, we show that the concentration of the ground state implies that for each $1 \leq i\leq L$ there exists a Krylov state $k$ such that \begin{equation}\label{eq:pk_bj_main}
        |p^k(b_i)|\geq \frac{|\gamma_0|^2\beta_L}{d^2}\,.
    \end{equation}
    Therefore, in the remaining of the proof we usually omit the Krylov index $k$ and assume we are working with the state such that \cref{eq:pk_bj_main} holds.\\
    \textbf{Step 4.} For each $b_i$, we can then prove that with probability at least $1-\delta$, $|\sqrt{p(b_i)}-\sqrt{\tilde{p}_{\veck}(b_i)}|\leq \epsilon$, where $\epsilon$ is given by \cref{eq:qdrift-epsilon}. Here,  $\sqrt{\tilde{p}_{\veck}(b_i)}$ is the approximation of $\sqrt{p(b_i)}$ obtained by the sampled qDRIFT sequence with indices $\veck=(k_1,\ldots,k_N)$. In turn, this implies that with probability at least $1-\delta$, $|\sqrt{\tilde{p}_{\veck}(b_i)}|\geq\frac{|\gamma_0|\sqrt{\beta_L}}{d} - \epsilon$. Finally, we have that 
    \begin{equation}
    1-\tilde{p}_{\veck}(b_i)\leq1-\left(\frac{|\gamma_0|\sqrt{\beta_L}}{d} - \epsilon\right)^2\eqqcolon 1-p
    \end{equation} 
    with probability at least $1-\delta$.\\
    \textbf{Step 5.} The probability of missing the bitstring $b_i$ from $N_r$ qDRIFT realizations from which we take $S$ samples each is therefore
    \begin{equation}
        p_{\text{fail}}(b_i) \leq \left((1-\delta)\left(1-p\right)^S + \delta\right)^{N_r}\,.
    \end{equation}
    In total, considering the probability of missing at least one of $L$ important bitstrings leads to \cref{eq:fail-qdrift}. \\
    \textbf{Step 6.} Similarly to Ref.~\cite{Yu2025_SKQD}, we conclude the proof by showing that the state $|\tilde{\phi}\rangle = (1/C)\sum_{j=0}^{L-1}a_j|b_j\rangle$ with $C = \sqrt{\sum_{j=0}^{L-1}|a_j|^2}$, representing the restriction on the $L$-dimensional basis of important bitsrings of the target ground state defined on the full $n$-qubit Hilbert space, represents a valid solution to the ground state approximation problem. Indeed, its energy is close to that of $|\phi_0\rangle$ as described in \cref{eq:sqdrift-energy-error}.
\end{proof}

\cref{thm:sqdrift_main} certifies that accurate approximations of the ground-state energy, which are provably achievable with KQD, are also efficiently attainable through the SqDRIFT sampling pipeline provided that the correct sparsity and initial overlap assumptions are met.
As in the SKQD case, the $\alpha_L^{(0)}$ parameter controls the approximation error, while $\beta_L^{(0)}$ affects the failure probability.
It is worth noticing that a similar analysis could be carried out for other types of unitary compilation errors, for instance replacing qDRIFT with Suzuki-Trotter product formulas, as well as for incoherent noise sources.
In the context of molecular systems, for which we usually have $\lambda \ll \mathcal{N} \max_i | h_i |$, the qDRIFT convergence is expected to be particularly favorable in practice. \\

Ideally, one would set the depth of each one of the $N_r$ randomized circuits (e.g., by selecting an appropriate number of terms $N$ in the qDRIFT sequences) based on some desired target accuracy in the approximation of exact time evolution.
In practice, however, we instead limit the maximum depth $D$ of each propagation circuit to a value that can be successfully implemented on current quantum processors.
In general, this choice will yield only a relatively crude approximation of the time-evolution channel.
Nevertheless, even under such conditions and in the presence of hardware noise, SqDRIFT remains fully compatible with the variational principle.
Therefore, its accuracy can be assessed by comparing \emph{a posteriori} the resulting energy to alternative methods, and a lower energy always signals a better result.

It is also important to recall that the key requirement that any circuit used in all flavors of SQD must fulfill is that it produces samples in the subspace on which the ground-state wave function has a large overlap.
Hence, even a very crude approximation of the time-evolution operator may not severely impact the SqDRIFT simulation quality, provided that the approximation does not severely change the circuit support. \\

In the context of electronic structure problems, we will consider molecular Hamiltonians that, when projected onto a $N_\text{MO}$-dimensional molecular orbitals set, read
\begin{equation}
 \begin{aligned}
  H_\text{ele} &= 
    \sum_{pq}^{N_\text{MO}} \sum_{\sigma \in \left\{ \alpha, \beta \right\}}
        h_{pq} a_{p \sigma}^\dagger a_{q \sigma} \\
    &+ \frac{1}{2} \sum_{pqrs}^{N_\text{MO}}
      \sum_{\sigma, \sigma^\prime \in \left\{ \alpha, \beta \right\}}
        (pq | rs) a_{p \sigma}^\dagger a_{r \sigma^\prime}^\dagger a_{s \sigma^\prime} a_{q \sigma} \, ,
 \end{aligned}
 \label{eq:SQ_Electronic}
\end{equation}
where $h_{pq}$ and $(pq | rs )$ are one- and two-electron integrals, respectively.
A natural choice is then to map each individual string of fermionic operators, $a_{p \sigma}^\dagger a_{q \sigma}$ and $a_{p \sigma}^\dagger a_{r \sigma^\prime}^\dagger a_{s \sigma^\prime} a_{q \sigma}$, to a single  $h_i$ operator among the ones appearing in \cref{eq:Hamiltonian_Decomposition}.
In this way, every term sampled in the qDRIFT construction corresponds to a full fermionic excitation term, which brings the key advantage that each individual propagator $\mathrm{e}^{-\mathrm{i} h_i t}$ preserves the Hamiltonian particle-number symmetry.
At the same time, however, implementing $\mathrm{e}^{-\mathrm{i} h_i t}$ for two-body excitations can be very demanding on limited-connectivity quantum processors, as these yield rather deep circuits using the Jordan-Wigner mapping scheme~\cite{Jordan1928_JW-Transformation}.
We will discuss below how a strategy to reduce the circuit depth.

\section{Computational Pipeline}

In this section, we outline the computational pipeline of the qDRIFT sampling protocol as well as the additional pre- and post-processing steps undertaken to make the circuits more amenable to execution on quantum computers and/or improve accuracy of the results.

\subsection{The sampling protocol}\label{sec:sampling_protocol}

As alluded to before, the qDRIFT protocol makes no assumption on the partition of the Hamiltonian (see \cref{eq:Hamiltonian_Decomposition}).
Applying the Jordan-Wigner mapping to any individual fermionic excitation term of the Hamiltonian given in \cref{eq:SQ_Electronic} will result in a sum of Pauli terms.
Therefore, the partition may be performed either on the terms and coefficients in the fermionic excitation basis or in the qubit basis after the fermion-to-qubit mapping has been applied.
The latter choice, when used to evolve a quantum state, will preserve the number of particles encoded in that state.
However, this is not guaranteed to be the case for the time evolution of the state under any one of the Pauli terms taken individually.
In this work, we choose to sample Hamiltonian terms expressed in the fermionic basis because this allows us to perform two crucial optimizations.
First, we can enforce particle-number conservation, a property that would get lost by sampling individual Pauli terms directly.
Second, we can optimize the fermion-to-qubit mapping procedure, to minimize the  circuit depth of their subsequent implementation on quantum hardware, which we discuss in more detail in the next section.

Particle-number conservation is crucial for the SQD algorithm to succeed.
If the circuits from which we sample the Slater determinants (or \emph{bitstrings}) of the Hamiltonian subspace would not preserve the number of electrons in the system, they would not explore the physical subspace of the Hamiltonian ground state efficiently.
Furthermore, quantum-hardware implementations of SQD-based algorithms use the \emph{configuration recovery} method~\cite{RobledoMoreno2024_SQD-Original}, a heuristic technique that allows to deal with the effects of noise at the level of individual samples.
CR distinguishes correct and noisy sampled bitstrings based on whether they have the same particle number as the initial state.
Such an error-detection protocol can work only if symmetry breaking implies the presence of noise.
Therefore, we should strive towards enforcing particle-number conservation on the level of the quantum circuits employed for the sampling step.

The fermionic terms appearing in our Hamiltonian can be either 1- or 2-body excitation terms which
take the form $a^\dagger_{p\sigma} a_{q\sigma}$ and $a^\dagger_{p\sigma} a^\dagger_{r\sigma^\prime} a_{s\sigma^\prime} a_{q\sigma}$, respectively.
Mapping any one of these terms to the qubit basis results in a sum of up to 4 or 8 Pauli terms, respectively (since mapping any individual operator results in 2 Pauli terms).
This would imply a significant depth overhead when implementing the time-evolution operator on a quantum circuit level, as $4$ ($8$) sequential Pauli evolution operations would have to be performed on the same set of qubits.
However, we can leverage further symmetries of the fermionic Hamiltonian to mitigate this penalty.

Since fermions in a system are indistinguishable, symmetries arise that result in 1- and 2-body excitations on identical support (i.e. the sets $\{p, q\}$ and $\{p, r, s, q\}$, respectively) to share the same coefficients.
Therefore, the resulting Pauli terms after the fermion-to-qubit mapping will also share the same coefficients, albeit with different signs depending on the permutation of the indices.
Consequently, mapping the sum of all fermionic excitations that share the same index support together results in cancellation of terms such that the sum of resulting Pauli terms contains at most $2$ or $4$ terms in the case of $1$- and $2$-body excitations, respectively.
This effectively halves the circuit depth required to implement particle-number conserving excitations.

\subsection{The fermion-to-qubit mapping}\label{sec:F2Q}

\begin{figure}

  \centering
  \includegraphics{Figures/f2q-layout.pdf}
  \caption{\textbf{F2Q layouting example}. Given a set of excitations, exemplified here by two single excitations, no optimization of the routing of fermionic spin orbitals to the qubit register leads to a naive layout with potentially high-weight Paulis. In contrast, an optimized layout leads to lower-weight Paulis (notice the re-shuffled indices), resulting in significant savings in terms of circuit depth. Note that, in principle, there is no reason why indices $1$ and $4$ should not be further placed next to each other. However, since the optimization must balance the weight of all excitations in the currently sampled batch, the optimal solution may not lead to minimal Pauli weight for \emph{every} single resulting Pauli string and, here, only an exemplary subset is shown for illustration purposes.}
  \label{fig:Layout}
\end{figure}

In the Jordan-Wigner mapping, a time-evolution term of the form $\mathrm{e}^{-\mathrm{i} a_{p,\sigma}^\dagger a_{q,\sigma} t}$ is mapped onto an exponential of highly non-local Pauli operators that include a chain of Pauli $Z$ operations acting on all qubits with index between $p$ and $q$.
The mapping is defined by a 1-to-1 mapping of each fermionic mode index $p_f$ ($q_f$) onto a qubit index $p_q$ ($q_q$).
The choice of the fermion-to-qubit (F2Q) index map, $\mathcal{F}(i_f \rightarrow i_q)$, can be optimized to maximize the locality of the sampled time-evolution terms.
This means that we minimize the distance between the qubit indices $p_q$ and $q_q$ for a specific fermionic excitation acting on $(p_f, q_f)$, in turn minimizing the Pauli weight of the resulting Pauli terms.
This process is depicted pictorially in \cref{fig:Layout}.
Such an optimization is possible only when working with a subset of sampled fermionic excitations rather than the full system Hamiltonian at any time.
In the limit in which all Hamiltonian terms are included (such as, e.g., in a Trotter-like decomposition), the reordering of $\mathcal{F}$ to minimize the distance of one particular excitation would necessarily imply the lengthening of another excitation.

Note that the function $\mathcal{F}$ underlying the JW mapping implicitly defines a 1-dimensional sorting of the qubits, independently of any hardware topology used later on.
Optimizing this sorting is therefore a one-dimensional problem, which can be solved straightforwardly.
The constraints should enforce that $\mathcal{F}$ is a valid permutation of indices, that is, every fermionic mode index must be mapped to exactly one unique qubit index.
Further, every fermionic excitation in the current set of sampled excitations will result in a penalty term that depends on its support.
Excitations which have support on only a single index (i.e. number operators) can be disregarded, because they will map to single-Pauli weight terms independently of $\mathcal{F}$.
In the case of $1$-body excitations, this leaves only terms of the form
$a_{p_f}^\dagger a_{q_f}$ where $p_f \neq q_f$.
For these terms, the goal is to minimize the distance of $|p_q - q_q|$, leading to Pauli terms of weight $2$ in the optimal case.

Two-body excitations can occur in more flavors because their support ranges from $1$ to $4$.
When the support is $1$, this corresponds to a $2$-body excitation mapping to single-Pauli weight terms, which can be ignored just like before.
The same holds for $2$-body excitations of support $2$, since this yields the form $a_{p_f}^\dagger a_{q_f}^\dagger a_{q_f} a_{p_f}$, i.e. two number
operators acting on $p_f$ and $q_f$, respectively.
While one \emph{could} minimize the distance of the corresponding qubit indices, this is not going to affect the weight of the mapped Pauli terms, which do not contain any chain of $Z$ operators connecting the two sites.
Therefore, the Pauli weight will be $2$, irrespective of $\mathcal{F}$.
It should be noted, that the layout of $p_q$ and $q_q$ on a constrained hardware topology \emph{will} affect the actual depth of the transpiled circuit, but this is a separate optimization problem to be solved later on.

That leaves us with $2$-body excitations of support $3$ and $4$.
The former case is equivalent to a $1$-body excitation of support $2$, with an additional number operator on some index $p_f$.
Just like in the case of support $2$, we can ignore the index of this number operator and postpone the layout optimization to the transpilation step.
Therefore, $2$-body excitations of support $3$ can be optimized by minimizing a single distance $|r_q - s_q|$.
Finally, $2$-body excitations with $4$ unique indices must minimize all involved index distances.
The weights of these terms are the hardest to minimize.

\subsection{Circuit Synthesis}
\label{sec:circuit_synthesis}

At this point each SqDRIFT randomization consists of a set of Pauli strings.
Synthesizing the quantum circuit that implements the time evolution of an initial state is carried out using Qiskit~\cite{Qiskit2024}.
In this work, we use Qiskit v2.1.0 with its default transpiler pipeline set to the highest level of optimization for the targeted backend connectivity.
To further optimize the quantum circuits we enabled a high-level synthesis plugin leveraging \texttt{rustiq} v0.0.8~\cite{Debrugiere2024_rustiq} configured to preserve the order of Pauli operators to avoid biasing of the qDRIFT protocol.

\subsection{Extending the quantum subspace}
\label{sec:ext-sqdrift}

Once samples have been collected from the quantum device and classical diagonalizations are complete, an additional -- optional -- step can be performed. 
Due to noise in current hardware, or partial convergence of the qDRIFT protocol, some important configurations arising from higher-order excitations may be lost.
To address this, we employ an extension of the SQD algorithm, indicated as Extended-SQD or Ext-SQD, which was originally introduced to improve the accuracy of SQD for excited state determination~\cite{Barison2025_SQD-ExcitedStates}, and later adapted for precise calculation of thermodynamic uncertainty relations~\cite{motta2025_tur}.
In the Extended-SQD workflow, electronic configurations $b_i$ sampled from quantum circuits serve as the starting point, and new configurations are generated by applying to them low-order electronic excitation operators.
In particular, we define a set of operators 
$E \in \{ \mathbb{I} , a^{\dagger}_{a\sigma} a_{i\sigma} , a^{\dagger}_{a\sigma} a^{\dagger}_{b\tau} a_{j\tau} a_{i\sigma}, \dots \}$ 
and enlarge the original subspace through $\ket{\tilde{b}_i} = E \ket{b_i}$.

The generated $\tilde{b}_i$ configurations may already exist in the original sample set -- in which case they are omitted -- or they may represent excited configurations that were missed. 
In this case, the inclusion of $\tilde{b}_i$ in the quantum subspace improves the accuracy of the final result.
We denote the subspace obtained by extending the SqDRIFT samples as Ext-SqDRIFT ($n$), where $n = \mathrm{S}, \mathrm{SD}, \mathrm{SDT}, \ldots$ indicates the rank of electronic excitation operators applied to the original wave function.
For a full description of the method, including its computational cost and limitations, we refer the reader to~\cite{Barison2025_SQD-ExcitedStates}.

\section{Results}

To showcase the use of SqDRIFT, we consider two systems from the family of the polycyclic aromatic hydrocarbons (PAHs).
PAHs are of key interest due to their relevance in environmental science (pollutants, carcinogens) and materials science (organic semiconductors).
Moreover, their extended $\pi$-conjugated systems lead to unique electronic, optical, and magnetic properties of physical interest~\cite{Harvey1997_PAH-Book,Lawal2017_PAH}.
An advantage of choosing this family of systems is their scalable structure which allows the complexity of the simulations required to compute their electronic structure to be controlled. 
For this reason, PAHs are commonly used as a test set to benchmark new electronic-structure calculation methods~\cite{Hachmann2007_PAH-DMRG,Hajgat2009_PAH-Benchmark,Schriber2018_PAH-ACI,Sharma2019_PAH-DMRG-DFT}. \\
In our experiments we first study naphthalene, using an active space including all the $\pi$-type orbitals, resulting in 10 electrons and 10 orbitals.
This system is small enough that the ground-state energy can be computed exactly with exact diagonalization and that can be treated numerically in noiseless simulations.
For hardware experiments instead, we switched to coronene, again including all $\pi$ orbitals, which yields an active space with 24 electrons and 24 orbitals.
At this scale, only approximate classical solutions can be obtained with considerable computational resources. \\

Unless stated otherwise, all numerical and hardware experiments combine the samples obtained from groups of SqDRIFT circuit randomizations using three different $k$ values $\{1,2,3\}$ to scale the evolution time (i.e.\@~the Krylov order index, cf.\@~\cref{eq:krylov_vectors}).
In all simulations, the time-step $t$ is set to 1 (in inverse Hamiltonian units).
SqDRIFT simulations executed with other $t$ values yield qualitatively equivalent results to the ones reported above.
The overall number of circuit randomizations is $500$ per group (totaling $1500$ circuits) in the numerical simulations, and $1000$ randomizations per group (totaling $3000$ circuits) for the experiments on quantum hardware.
In the numerical section we simulate $512$ noiseless samples for each circuit while we collected $1024$ shots from each circuit for the hardware experiments.

\begin{figure*}
\subfloat[\label{fig:F2QOptHF}]{%
\includegraphics[width=0.495\textwidth]{Plots/F2Q_HF.pdf}
}
\subfloat[\label{fig:F2QOptLO}]{%
\includegraphics[width=0.495\textwidth]{Plots/F2Q_LO.pdf}
}
\caption{\textbf{Two-qubit circuit depth improvement due to the fermion-to-qubit layout optimization.}
We plot the two-qubit gate depth of the transpiled SqDRIFT circuits (for $k=1$; these results are independent of $k$) without and with the fermion-to-qubit layout optimization along the $x$ and $y$ axis, respectively.
Thus, if a data point lies \emph{below} the (red) diagonal, the F2Q layout optimization has decreased the two-qubit gate circuit depth.
The two panels show the results obtained for the numerical simulations of naphthalene for both the HF (a) and LO (b) orbitals, respectively.
The number of excitations included in each randomized circuit is indicated by the color shade.
}
\label{fig:F2QOpt}
\end{figure*}

\begin{figure*}
\subfloat[\label{fig:ConvergenceExcitations}]{%
\includegraphics{Plots/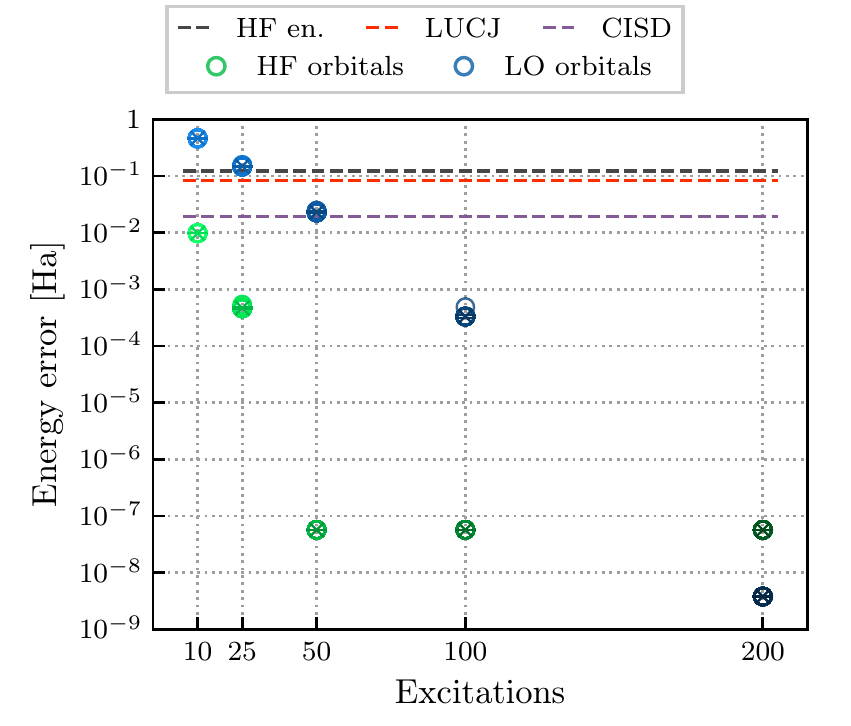}
}
\subfloat[\label{fig:ConvergenceRandomizations}]{%
\includegraphics{Plots/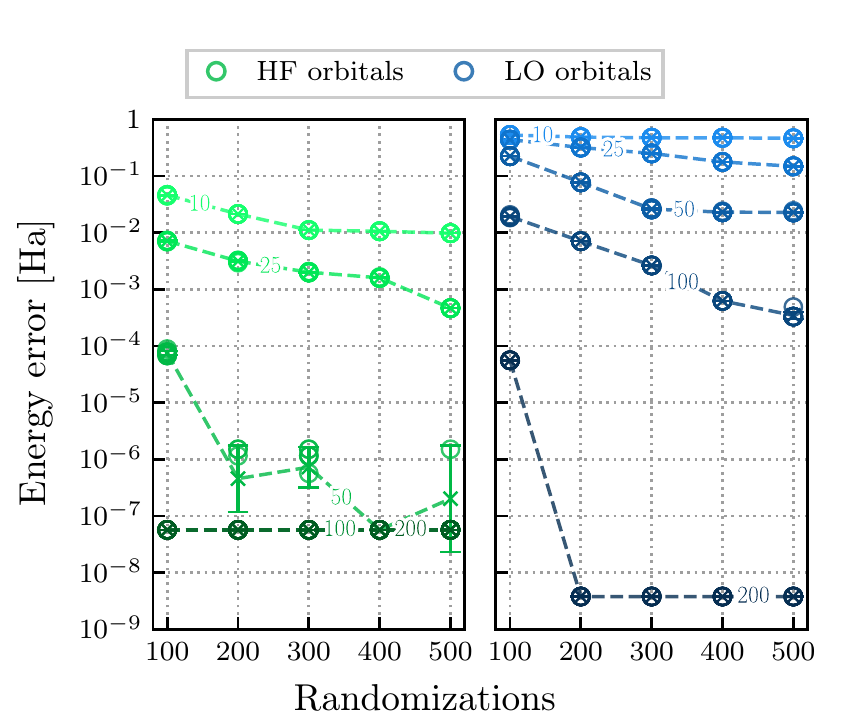}
}
\caption{\textbf{Convergence of the energy error of naphthalene with the length and number of SqDRIFT sequences.}
In both panels we plot the energy error of noiseless SQD simulations with respect to the exact energy obtained by diagonalization of the full system Hamiltonian.
The hollow circles in both panels correspond to repetitive SQD simulations where each one is given a randomly selected $90\%$ of the available unique samples obtained from the noiseless quantum circuit simulations.
(a) Along the $x$ axis we vary the number of excitations sampled from the Hamiltonian (i.e. the SqDRIFT length) for each randomized circuit.
The green and blue circles indicate the Hartree-Fock (HF) and localized (LO) orbitals, respectively.
The HF and Configuration-Interaction including Single and Double Excitations (CISD) reference values are indicated by the solid and dashed black line, respectively.
The yellow line corresponds to the SQD energy obtained from the noiseless samples collected from the LUCJ ansatz with its parameters fixed to the $t1$ and $t2$ amplitudes of a prior CCSD calculation. In the bottom panel, we show for both types of orbitals and for each energy value the subspace dimension used in the diagonalization.
(b) Here the $x$ axis indicates the increasing number of SqDRIFT randomizations from which the samples are included in the pool of available bitstrings to sample from.
The green and blue lines indicate the convergence of the energy error within the HF and LO orbitals, respectively.
The shade of color indicates the number of excitations included in each randomized circuit (as indicated in subfigure a) as well as the overlaid number.
In the case of the HF orbitals, the lines of $100$ and $200$ excitations are superimposed.
}
\label{fig:ConvergenceQDriftParams}
\end{figure*}

\begin{figure*}
\subfloat[Simulated results\label{fig:ConvergenceSubspace}]{%
\includegraphics{Plots/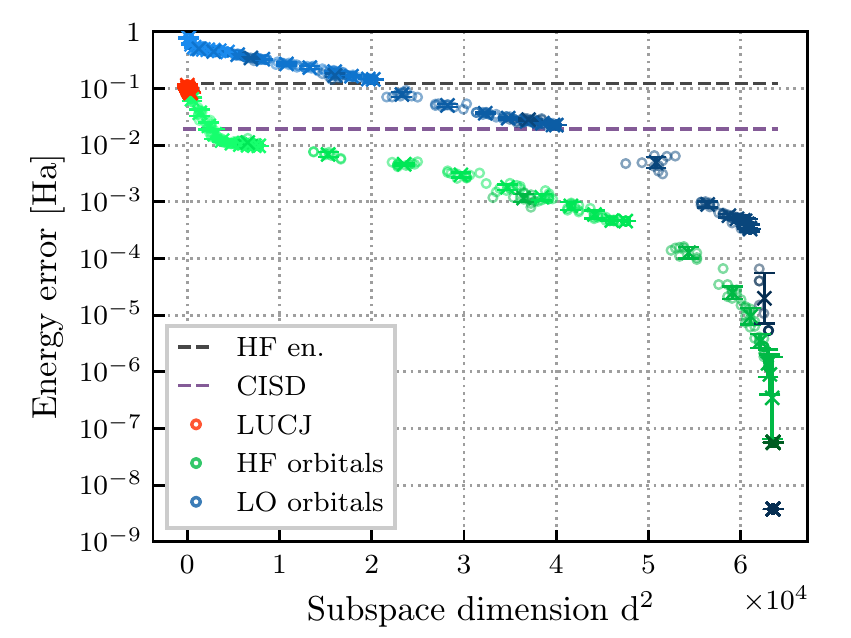}
}
\subfloat[Hardware experiments\label{fig:Naphthalene}]{%
\includegraphics{Plots/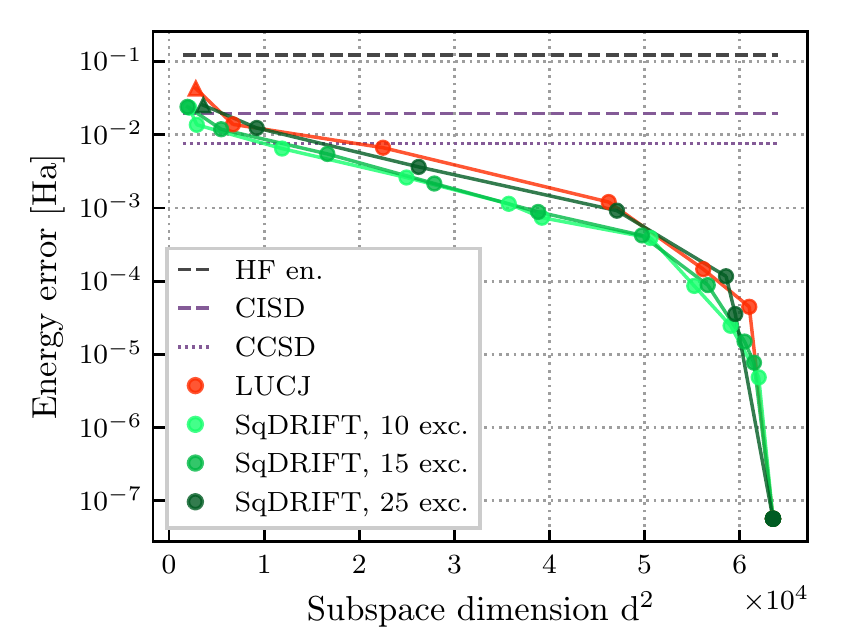}
}
\caption{\textbf{Energy error of naphthalene as a function of the diagonalization subspace dimension.}
Analogously to \cref{fig:ConvergenceQDriftParams} we plot the energy error with respect to the exact energy obtained by diagonalization of the full system Hamiltonian.
Here, the $x$ axis indicates the varying size of the diagonalization subspace.
(a) For the results obtained from noiseless circuit simulations, the distributed data is obtained by taking increasing percentages from $\{10, 20,\dots, 100\%\}$ of the number of unique samples collected.
Each such simulation is repeated $10$ times (indicated by the hollow circles) for the data obtained from different SqDRIFT circuit depths (as indicated by the color shade akin to \cref{fig:ConvergenceQDriftParams}).
Since the subspace dimension cannot be controlled directly, the different simulations place slightly differently along the $x$ axis.
A cross with error bars is placed at the center of each set of repetitive simulations.
(b) The lowest energy estimates obtained throughout $3$ iterations of SQD post-processing and configuration recovery (CR) applied to the samples obtained from hardware experiments.
Circles are indicative of the lowest energy estimate obtained without the use of CR while triangles indicate that CR resulted in an energy decrease. The following classical computations serve as references: Hartree-Fock (black solid; $-383.384~\mathrm{Ha}$), CISD (black dashed; $-383.487~\mathrm{Ha}$), CCSD (black dash-dotted; $-383.498~\mathrm{Ha}$).}
\end{figure*}

\subsection{Classical simulations}
For an initial analysis of the SqDRIFT protocol we rely on noiseless classical simulations.
To this extent, we use naphthalene as the target molecule whose conjugated $\pi$ system is made up of 10 $p_z$ orbitals of its carbon atoms, which can be mapped to $20$ qubits.
At this scale, the system lends itself to exact quantum circuit simulation techniques as well as the full diagonalization of the system Hamiltonian. \\
Due to the structure of the system and the nature of the sampling protocol of SqDRIFT, we compare two orbital basis sets: localized orbitals (LO) and canonical Hartree-Fock (HF) orbitals.
We expect LO orbitals to result in shallower SqDRIFT circuits compared to the HF ones.
In fact, orbital localization will reduce the extent of long-range interaction terms, which will therefore be sampled with a lower probability.
This is evident from \cref{fig:F2QOpt} which shows the circuit depth reduction that can be obtained due to the F2Q layout optimization described earlier.
As expected, the two-qubit gate circuit depths are larger in the HF orbitals (\cref{fig:F2QOptHF}) by about a factor of $\sim4$ compared to the LO orbitals (\cref{fig:F2QOptLO}). \\
Note that the depth reduction due to the F2Q layout optimization diminishes by increasing number of excitations included in each SqDRIFT circuit.
This is to be expected since the larger number of excitations are more likely to result in conflicting constraints, such that optimizing the interaction distance of any one excitation is more likely to penalize another one.
In the remainder of this section we analyze the convergence behavior of SqDRIFT in the two chosen orbital bases: HF and LO. \\
We analyze the convergence with respect to three parameters.
The first two directly impact the quality of the solutions produced by SqDRIFT as per \cref{thm:KQD_QDRIFT} in \cref{appendix:proof_thm_krylov}, namely (i) the number of excitations sampled from the Hamiltonian (or length of qDRIFT sequences; $N$ in \cref{eq:qDRIFT_Unitary}) and (ii) the number of randomizations (or independent realizations of qDRIFT; $N_r$ in \cref{thm:sqdrift_main}).
The third parameter, the size of the diagonalization subspace, is the key parameter of the SQD post-processing step.
In the following we test their their individual influence separately, by fixing two of them while varying the third. \\
In \cref{fig:ConvergenceQDriftParams}, we show the scaling of the energy error as a function of the qDRIFT parameters.
Both subfigures indicate that the SqDRIFT energy can be converged to numerical precision for a fixed subspace dimension and irrespective of the chosen orbital basis.
This indicates that the chosen diagonalization subspace is larger than the number of configurations on which the ground-state wave function has a large support.
However, the convergence with respect to the number of excitations to include in each circuit is much slower for the LO orbitals than for HF (\cref{fig:ConvergenceExcitations}).
This result is expected because the overlap between the starting determinant and the exact wave function is smaller for LO than for HF ones.
Therefore, as expected by the bounds reported above, SqDRIFT requires deeper circuits.
\cref{fig:ConvergenceExcitations} also shows that the samples obtained from a noiseless simulation of the LUCJ ansatz provide only a modest improvement over the initial HF energy estimate.
This stands in contrast to $10$ excitations being sufficient to surpass the CISD reference energy in the HF orbital basis. \\
\cref{fig:ConvergenceRandomizations} analyses the SqDRIFT convergence with respect to the number of qDRIFT samples.
Noteworthy, the energy consistently improves by increasing the number of samples (with the only exception of the curve obtained for $N_r=500$ and $N=50$, which is discussed more in detail below).
This indicates that SqDRIFT allows for circuit depth to be traded in for randomized circuit realizations: fewer randomizations are required when sufficiently deep circuits are being sampled from, and vice versa.
The convergence is faster when increasing the circuit depth, as can be seen by the faster descent of the individual curves along the $y$ axis compared to the slower slope of any curve along the $x$ axis.
This is consistent with known concentration effects in qDRIFT: for sufficiently deep circuits, even a single randomization yields an accurate representation of the time-evolution operator.
Finally, it should be noted that both, the circuit generation and sampling steps, are inherently stochastic, and their variance can affect the energy estimate in unpredictable ways.
This is evident from the data obtained for $50$ excitations in the HF orbitals, where outliers resulted in much greater variance than observed elsewhere.
This emphasizes the need for a critical assessment and analysis of the results obtained from the SQD post-processing routine. \\

We complete our numerical analysis with \cref{fig:ConvergenceSubspace} which plots the convergence of the energy error with respect to the dimension of the diagonalization subspace.
The results shown here are obtained from SQD post-processing steps given subsets of samples ranging from $10\%$ to $100\%$ of all collected bitstrings.
As explained earlier, a total of $768'000$ samples are generated from noiseless circuit simulations ($512$ samples from $500$ randomized circuits at $3$ different $k$ values).
However, the right limit of \cref{fig:ConvergenceSubspace} is reached much earlier because only unique bitstrings will be included in the diagonalization subspace.
Furthermore, the size of the FCI Hilbert space is $63'504$, indicating the true limit of the $x$ axis for naphthalene.
The energy errors reported in \cref{fig:ConvergenceExcitations} provide a lower bound along the $y$ axis of \cref{fig:ConvergenceSubspace} for the results obtained from the corresponding circuit depths.
As expected, the best energy estimates are obtained with the largest subspace dimension.
The shape of the convergence curve is especially interesting: the steep descent towards the right end of the $x$ axis reflects that an error lower than 10$^{-6}$ with respect to the FCI energy can be recovered numerically only if the smallest contributing determinants are included in the subspace.
The SqDRIFT protocol would not be used in such a scenario in practice, as the randomization protocol would pose a severe bottleneck compared to the direct FCI calculation.
Importantly, \cref{fig:ConvergenceSubspace} highlights another key difference between the LO and HF orbital results.
While the latter surpasses the CISD reference value already at low subspace dimensions and reaches an accuracy of $\approx 1~\mathrm{mHa}$ for a subspace of $\mathrm{d}^2\approx40'000$, the LO orbital results require subspace dimensions that almost saturate the full Hilbert space to reach the same accuracy.
This indicates that the ground-state wave function is less concentrated in the LO orbitals, rendering its applicability for the SqDRIFT protocol less useful. \\
It is interesting to observe that the energy obtained from the noiseless LUCJ ansatz does not vary significantly with increasing subspace dimension.
This indicates that only a small number of unique bitstrings are sampled by that wave function ansatz, when using classically-initialized parameters. \\

From this numerical analysis we can conclude that, despite of the significant reduction in quantum circuit depth, the use of the LO compared to the HF orbitals results in poorer convergence of SqDRIFT overall.
This is because the wave function is less concentrated in the LO orbitals than in the HF ones, rendering the classical post-processing step of SQD harder.
Therefore, we use the HF orbital basis in the remainder of this work.

\begin{figure*}
\includegraphics{Plots/coronene_merged.pdf}
  \caption{\textbf{Energy of coronene as a function of the subspace diagonalization dimension}.
  (a) We plot the absolute energy values obtained from different approximate methods.
  The following classical computations serve as references: Hartree-Fock (black solid; $-916.015~\mathrm{Ha}$), CISD (black dashed; $-916.227~\mathrm{Ha}$), CCSD (black dash-dotted; $-916.282~\mathrm{Ha}$), CCSD(T) (black dotted; $-916.288~\mathrm{Ha}$), FCIQMC (cyan solid; $-916.290~\mathrm{Ha}$), HCI (red; min: $-916.287~\mathrm{Ha}$) and HCI with a deterministic PT2 correction (purple; min: $-916.286~\mathrm{Ha}$).
  For LUCJ (yellow; min: $-916.064~\mathrm{Ha}$) and SqDRIFT (shades of green; min: $-916.232~\mathrm{Ha}$) we plot the minimum energy values obtained throughout $2$ iterations of SQD post-processing including configuration recovery (CR).
  Triangles are indicative of the minimum energy obtained \emph{with} CR, circles indicate the minimum energy obtained \emph{without} CR. (b) Zoomed inset obtained from the blue box on the left, highlighting the SqDRIFT energies under CISD and the solutions found with Ext-SqDRIFT for $5$ (plus sign) and $15$ (crosses) excitations.
  }
\label{fig:Coronene}
\end{figure*}

\subsection{Hardware experiments}

In this section we execute the SqDRIFT workflow using the IBM Quantum Heron r2 processor, \texttt{ibm\_aachen}.
First, for naphthalene, we compare the results to the ones obtained in the classical simulations, from the previous section.
Then, we target coronene to benchmark SqDRIFT on a system beyond the classically accessible brute-force size.

\subsubsection{Naphthalene}

\cref{fig:ConvergenceExcitations} shows that the HF orbital-based SqDRIFT yields an energy lower than the CISD reference value by including $10$ excitations.
Moreover, including $25$ excitations yielded chemical accuracy.
In order to check if the same trend is observed on hardware experiments, we randomize circuits with $10$ and $25$ as well as an intermediary $15$ excitations.
For $k=1$ these circuits reach $69\pm69$ ($113\pm116$), $140\pm177$ ($246\pm217$), and $355\pm222$ ($692\pm467$) two-qubit gate depths (counts) for $10$, $15$, and $25$ excitations, respectively.
Since the circuit shape is independent of $k$ the statistics are similar for other $k$ values.
More details are listed in \cref{tab:hw_nap}.
The LUCJ ansatz yields a two-qubit gate circuit depth of $56$ and count of $324$. \\
\cref{fig:Naphthalene} reports the energy error obtained from the hardware experiments.
The noiseless simulation results and hardware experiments follow the same trend.
Moreover, for this specific test-case, we observe that the combination of hardware noise and CR error mitigation facilitates the exploration of the relevant subspace. 
This is best seen in the case of LUCJ for which the noiseless simulation yielded a small improvement over HF due to the small number of sampled determinants.
Conversely, for hardware experiments, larger subspaces are reached, indicating that noise improves the sampling procedure, yielding better energy estimates.
The same conclusions can be drawn for the SqDRIFT experiments where all subspace dimensions can be reached for any number of excitations being included in the circuits. However, this behavior is not expected to hold as the system size is increased, due to the exponentially-large number of irrelevant determinants that the effect of noise may yield.

\subsubsection{Coronene}

Now we turn our attention to the results obtained for coronene which are summarized in \cref{fig:Coronene}.
With $48$ qubits this system lies beyond the scope for brute forcing an exact diagonalization and, thus, its ground state energy must be approximated.
Here, we show the results obtained from four different classical methods: HF, CISD, FCIQMC, and \emph{heat bath configuration interaction} (HCI)~\cite{Holmes2016_HBCI,holmes2016efficient,Sharma2017_Semistochastic}.
For the HCI calculations, different values of the truncation threshold $\varepsilon_1$ are considered between: $\varepsilon_1 = 10^{-3}$ and $\varepsilon_1 = 3 \cdot 10^{-6}$, yielding increasingly large subspace dimensions.\\

The classical reference ground-state energy of coronene is obtained using FCIQMC~\cite{Booth2009_FCIQMC}, with the aim to provide an estimate of the exact FCI energy.
Specifically, we apply FCIQMC within the adaptive-shift approximation~\cite{Ghanem2019_InitiatorUnbiasing-FCIQMC,Ghanem2020_AdaptiveShift-FCIQMC} using up to $10^9$ walkers.

We complement these classical energy computations with those computed by SQD with the LUCJ ansatz as well as SqDRIFT.
The LUCJ ansatz had a two-qubit gate depth of $124$ and count of $1774$.
The specifications for the randomized SqDRIFT circuits vary similarly for all values of $k$.
Here, we use $5$ different groups of circuit randomizations ranging from $5$ to $25$ excitations.
For $k=1$ these circuits have a two-qubit gate depth (count) of $40\pm38$ ($68\pm61$) and $892\pm557$ ($2450\pm1698$) for $5$ and $25$ excitations, respectively.
More details are listed in \cref{tab:hw_cor} and \cref{fig:two_qubit_hists}. 

As shown in \cref{fig:Coronene}, the SQD energy obtained with the LUCJ ansatz slowly improve upon the base HF energy.
As indicated by the triangles rather than circles, the minimal energy throughout the SQD pipeline is obtained at an iteration that leverages configuration recovery. 

In contrast, the results obtained from SqDRIFT show faster convergence than those obtained from the LUCJ circuit, even surpassing CISD without the need for configuration recovery.
We expect that samples obtained from the lower-depth circuits, which are shallower and therefore less susceptible to noise, will contribute the most to this energy improvement.
Interestingly, the SqDRIFT energy increases, for a given $\mathrm{d}^2$ value, by increasing the number of excitations involved in their circuits.
This trend is reversed to the one observed, for instance, for noiseless simulations of naphthalene where, at a given subspace dimensions, the circuits involving the largest number of excitations yields the lowest energy estimates.
This behavior can be rationalized as follows: circuits constructed with low number of excitations (i.e., shallower circuits) are less affected by noise, but they explore only low-excited determinants.
For this reason, the corresponding SqDRIFT energy is slightly lower than the CISD reference energy.
When a larger number of excitations is used, the relevant higher-order excitations are sampled as well.
For a fixed sampling budget, this should result in better energy estimates, as predicted by the SqDRIFT asymptotic error bounds.
However, this trend is not observed in the experiments because hardware noise introduces errors in the sampled bitstrings, irrespectively of their excitation degree. The increase in the number of two-qubit gates with the number of excitations is shown in \cref{fig:two_qubit_hists}. We also see that the number of two-qubit gates of the LUCJ circuit is significancy higher than the shallower SqDRIFT circuits, yielding noisier measurement outcomes, likely being the cause of the significantly inferior performance of the LUCJ circuit as compared to the SqDRIFT ones. 
Furthermore, as energies are computed and compared using a fixed subspace dimension, including higher-order excitations reduces the number of double excitations included in the subspace.
Since for coronene double excitations contribute most significantly to the correlation energy, achieving higher accuracy likely requires a more careful balance between the excitation degree and the allocation of sampling resources.
For this reason, the energy estimate remains higher than the CISD reference.

This effect is partially mitigated by configuration recovery (as indicated by the triangles on the curves of $15$, $20$, and $25$ excitations).
The energy estimate can be improved by combining samples obtained from circuits using few and many excitations (dark green data points in \cref{fig:Coronene}) since the space of excited Slater determinants can be explored more systematically.
Note, however, that this step is heuristic, and is not guaranteed to improve the SqDRIFT accuracy in general.

Finally, in \cref{fig:Coronene} we include also the results obtained by applying single, double, and triple excitation operators to the configuration obtained by SqDRIFT using a randomized circuit group made with $15$ excitations.
With the addition of a single extra computing step, we see an improvement of $\sim 35~\mathrm{mHa}$, bringing the SqDRIFT energies to within $<20~\mathrm{mHa}$ of the best HCI result, which requires an order of magnitude more electronic configurations.

\section{Conclusions and outlook}

We have introduced SqDRIFT, a new algorithm that bridges the Krylov sample-based approach introduced in~\cite{Yu2025_SKQD} with the qDRIFT randomized compilation strategy for time evolutions~\cite{Campbell2022_qDRIFT-QPE}.
By reducing the depth of the time-evolution circuits, SqDRIFT extends the application range of SKQD to fermionic many-body Hamiltonians with complex interaction terms, such as molecular electronic-structure ones.
We have proved that SqDRIFT preserves the same convergence guarantees of SKQD.
This implies that the ensemble of randomized time-evolution circuits generated by qDRIFT constitutes a set of good sampling states for a subspace method.
Moreover, classical SqDRIFT simulations of naphthalene have shown that qDRIFT can generate, in practice, accurate subspaces before the standard qDRIFT convergence bounds are saturated.
Quantum computations based on SqDRIFT for coronene illustrate that SqDRIFT can provide reliable energy estimates even when applied to systems that are at the edge of exact classical simulations.

From an application perspective, SqDRIFT can be combined with all the extensions to the SQD method that have been recently put forward, including excited-state calculations~\cite{Barison2025_SQD-ExcitedStates}, inclusion of dynamical correlation effects~\cite{Danilov2025_SQD-AFQMC}, the combination with embedding schemes~\cite{Shajan2025_SQD-Embedding,Kaliakin2025_SQD-Solvent}, and entanglement forging~\cite{smith2025quantumcentricsimulationhydrogenabstraction}.
From an algorithmic perspective, the generic SqDRIFT-based theoretical framework derived in the present work can be further optimized to facilitate its execution on current quantum computers.
First, we have shown how choosing an appropriate orbital basis crucially impacts the efficiency of SqDRIFT.
If, on the one hand, canonical Hartree-Fock orbitals yield a relatively fast convergence of SqDRIFT with the subspace dimension, on the other hand they produce deep time-evolution circuits, even after the qDRIFT compilation.
Localized orbitals yield much shallower circuits, since long-range interaction terms are not sampled.
The price to pay is, however, that the ground-state wave function becomes less concentrated.
Other choices of the orbital basis (e.g., natural or split-localized orbitals) may provide a better tradeoff between circuit complexity and SqDRIFT efficiency.
Independently of the choice of the orbital basis, the circuit depth of the SqDRIFT circuits crucially depends on the adopted fermion-to-qubit mapping.
In the present work, we have shown that a sample-specific optimization of the Jordan-Wigner mapping~\cite{Jordan1928_JW-Transformation} can greatly reduce the circuit depth.
However, the optimization becomes increasingly less beneficial for deep circuits, for which the non-locality of the Jordan-Wigner transformation cannot be circumvented.
For such circuits, applying recently-proposed compact fermion-to-qubit mappings~\cite{Derby2021_FermionMapping,Algaba2025_F2Q-Compact,Nigmatullin2025_F2Q-Crossover} may drastically reduce the depth of a single Trotter step, and, therefore, enable realizing deeper qDRIFT circuits.
The SqDRIFT circuits could be further rendered more hardware friendly by biasing the selection of the Hamiltonian terms towards hardware-friendly Trotter steps.
Although this bias affects the asymptotic error of SqDRIFT, it can be controlled as described in Ref.~\cite{Kiss2023}.
Moreover, we expect that the bias effect will be balanced by the smaller impact of the noise.
Lastly, as for KQD and SKQD, also the efficiency of SqDRIFT depends on the choice of the initial state for the quantum simulation.
For strongly-correlated systems, not dominated by a single predominant configuration, selecting a single configuration as the initial state may not sample efficiently the whole wave function support.
We expect that these systems can be efficiently simulated with SqDRIFT through an iterative procedure.
A first SqDRIFT calculation employing, e.g., the Hartree-Fock determinant as the initial state can be used to identify the predominant configurations.
These configurations can be used as the starting point for a second SqDRIFT calculation, and the procedure can be repeated until self-consistency.
Such a procedure mimics the iterative subspace construction of classical selected CI methods~\cite{Sharma2017_Semistochastic}, and may enable a more efficient exploration of the relevant subspace.

\begin{acknowledgments}
We thank Stefan W\"orner, Simon Martiel, Julien Gacon and Elena Pe\~na Tapia for fruitful discussions. This research was supported by NCCR MARVEL, a National Center of Competence in Research, funded by the Swiss National Science Foundation (grant number 205602) and by RESQUE funded by the Swiss National Science Foundation (grant number 225229). This work was supported by the Hartree National Centre for Digital Innovation, a UK Government-funded collaboration between STFC and IBM.
\end{acknowledgments}
\bibliography{references}

\clearpage
\onecolumngrid
\appendix
\section{Technical proofs}
Throughout this Appendix we adopt the following notation.
\begin{description}
    \item[$U^j  \coloneqq  \mathrm{e}^{\mathrm{i}jHt}$] The ideal evolution on $n$ qubits with ideal state $\ket{\psi^j}\coloneqq U^j\ket{\psi_0}$.
    \item [$V^j_{\veck}$] A sampled unitary operator with indices $\veck=(k_1,\ldots,k_N)$ and $\ket{\tilde{\psi}^j_{\veck}}\coloneqq V^j_{\veck}\ket{\psi_0}$.
    \item [$p^j_{\veck}$]  The probability of sampling  $V^j_{\veck}$.
    \item [$V^j\coloneqq \frac{1}{N_r}\sum_{m=1}^{N_r}V^j_{\veck_m}$] The implemented qDRIFT protocol with $\ket{\tilde{\psi}^j}\coloneqq V^j\ket{\psi_0}$.
    \item [{\(\mathbb{E}[V^j]\coloneqq \sum_{\veck} p^j_{\veck} V^j_{\veck}\)}] The ideal qDRIFT protocol, i.e. $V^j$ when $N_r\rightarrow\infty$, and $\mathbb{E}[\ket{\tilde{\psi}^j}]\coloneqq \mathbb{E}[V^j]\ket{\psi_0}$.
\end{description}

\subsection{Krylov quantum diagonalization with qDRIFT compilation}\label{appendix:proof_thm_krylov}

\begin{lemma}[\textbf{Krylov quantum diagonalization with qDRIFT}]\label{thm:KQD_QDRIFT}
Let $H = \sum_i c_i h_i$ with $\lambda = \sum_i |c_i|$ be an $n$-qubit Hamiltonian with eigenvalues $E_0\leq E_1\leq\ldots$. Consider the Krylov subspace spanned by
\begin{equation}
    |\psi_k\rangle = \mathrm{e}^{-\mathrm{i}kHt}|\psi_0\rangle
\end{equation}
for $k\in\{0,1,\ldots,d-1\}$ (with $d$ odd) and some initial reference wavefunction $|\psi_0\rangle$. If all states $\{|\psi_k\rangle\}$ are prepared using $N_r$ qDRIFT randomizations of length $N$, then, for any $\delta > 0$, 
the approximate ground state energy $\tilde{E}$ obtained by applying the Krylov diagonalization procedure satisfies
the bound
\begin{equation}
        \tilde{E}-E_0 \leq \xi\,,
\end{equation}
with probability $1-\delta$, where
\begin{align}
    \xi = \frac{\chi}{|\gamma^\prime_0|^2} + \frac{6||H||}{|\gamma_0^\prime|^2} & \left(\frac{2\chi}{\Delta^\prime} + \zeta + 8\left(1+\frac{\pi\Delta^\prime}{4||H||}\right)^{-2d+1}\right)
\end{align}
and 
\begin{align}
    \chi&\leq 2\epsilon_Q ||H||,\\
    \zeta &\leq 2d(\epsilon_R + \epsilon_Q),\\
    |\gamma^\prime_0|^2 & \geq |\gamma_0|^2 - 2\epsilon_R - 2\epsilon_Q\,,
\end{align}
with
\begin{equation}
        \epsilon_Q = d(d-1)t\lambda\left(\frac{2t\lambda}{N} + \sqrt{\frac{11\ln(2^{n+1}/\delta)}{NN_r}}\right)\,.
\end{equation}
Here, $\epsilon_R$ is a regularization threshold~\cite{kirby2024analysis}, $|\gamma_0|^2$ is the overlap between $|\psi_0\rangle$ and the true ground state, $\Delta^\prime = \Delta - \chi/|\gamma^\prime_0|^2$ is a rescaled version of the spectral gap $\Delta = E_1-E_0$ and the evolution time is set to $t=\pi / (E_{2^n - 1}-E_0)$.
\end{lemma}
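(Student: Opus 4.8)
The plan is to combine the established perturbation analysis of \emph{regularized} quantum Krylov diagonalization with explicit bounds on the qDRIFT compilation error. The skeleton of the argument is the one developed in Refs.~\cite{epperly2021subspacediagonalization,kirby2024analysis}: if the Krylov matrices $\matH$ and $\matS$ are known only up to perturbations, and one regularizes by discarding directions of $\matS$ with eigenvalue below a threshold $\epsilon_R$, then the eigenvalue estimate $\tilde{E}$ satisfies a bound of the form
\begin{equation}
    \tilde{E} - E_0 \leq \frac{\chi}{|\gamma_0'|^2} + \frac{6\|H\|}{|\gamma_0'|^2}\left(\frac{2\chi}{\Delta'} + \zeta + 8\left(1+\frac{\pi\Delta'}{4\|H\|}\right)^{-2d+1}\right),
\end{equation}
where $\chi$ bounds the perturbation of $\matH$, $\zeta$ bounds the perturbation of $\matS$ together with the regularization error, $|\gamma_0'|^2$ is the perturbed overlap of the reference with the ground state, and the last term is the \emph{ideal} Krylov convergence rate for exact time evolution. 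The task therefore reduces to controlling $\chi$, $\zeta$, and $|\gamma_0'|^2$ through the single compilation-error quantity $\epsilon_Q$.

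First I would bound the discrepancy between each compiled evolution $V^j$ and the ideal $U^j = \mathrm{e}^{\mathrm{i}jHt}$ by splitting it into a deterministic bias and a statistical fluctuation,
\begin{equation}
    V^j - U^j = \left(\mathbb{E}[V^j] - U^j\right) + \left(V^j - \mathbb{E}[V^j]\right).
\end{equation}
The bias term is controlled by the qDRIFT channel-error estimate of Ref.~\cite{Campbell2019}: a single unit-time evolution $\mathrm{e}^{-\mathrm{i}Ht}$ compiled with a length-$N$ sequence incurs an operator error of order $t^2\lambda^2/N$, and accumulating such errors across the $k\le d-1$ steps of each Krylov vector and over the index pairs entering the matrix elements yields the $d(d-1)t\lambda\cdot(2t\lambda/N)$ contribution inside $\epsilon_Q$. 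The fluctuation term is the empirical average of $N_r$ i.i.d.\ unitaries minus its mean; I would bound its operator norm with a matrix concentration inequality (matrix Bernstein/Hoeffding, in the spirit of Ref.~\cite{Huang2021_qDRIFT-Concentration}), which produces a deviation of order $\sqrt{\ln(2^{n+1}/\delta)/(N N_r)}$ per step after a union bound over the $d$ Krylov indices and the Hilbert-space dimension $2^n$. Summing the two contributions gives the stated $\epsilon_Q$, valid on an event of probability at least $1-\delta$.

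With the state-preparation error $\epsilon_Q$ in hand, I would propagate it to the matrix elements. Since $\matH_{ij}=\langle\psi_i|H|\psi_j\rangle$ and the compiled states differ from the ideal ones by at most $\epsilon_Q$ in norm, a triangle-inequality argument gives $\chi\le 2\epsilon_Q\|H\|$; the analogous estimate for $\matS_{ij}=\langle\psi_i|\psi_j\rangle$, combined with the regularization threshold, yields $\zeta\le 2d(\epsilon_R+\epsilon_Q)$; and the reference overlap degrades by at most $2\epsilon_R+2\epsilon_Q$, giving $|\gamma_0'|^2\ge|\gamma_0|^2-2\epsilon_R-2\epsilon_Q$. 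Substituting these into the regularized Krylov bound, with the rescaled gap $\Delta'=\Delta-\chi/|\gamma_0'|^2$ and the choice $t=\pi/(E_{2^n-1}-E_0)$ that normalizes the spectrum so the ideal convergence factor takes the stated form, completes the derivation; the fluctuation being the only stochastic ingredient explains why the energy bound holds in expectation on the high-probability event.

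The main obstacle I anticipate is the fluctuation bound: controlling $\|V^j-\mathbb{E}[V^j]\|$ in operator norm requires a matrix concentration argument exhibiting the $1/\sqrt{N}$ enhancement that arises because each individual qDRIFT exponential $\mathrm{e}^{-\mathrm{i}h_k t\lambda/N}$ is close to the identity, so that the per-realization deviation from the mean is itself small. Obtaining this scaling—rather than a naive $1/\sqrt{N_r}$ that ignores the smallness of each step—and correctly merging it with the deterministic bias through the eigenvalue-perturbation machinery is the delicate part; the remainder is bookkeeping of triangle inequalities together with the invocation of the regularized generalized-eigenvalue bound of Ref.~\cite{kirby2024analysis}.
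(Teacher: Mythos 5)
Your proposal follows essentially the same route as the paper's own proof: the identical bias-plus-fluctuation decomposition $V^j - U^j = (\mathbb{E}[V^j]-U^j) + (V^j-\mathbb{E}[V^j])$, with the bias controlled by the qDRIFT channel-error bound (the paper cites Proposition~3.2 of Ref.~\cite{Huang2021_qDRIFT-Concentration}) and the fluctuation controlled by a matrix-concentration bound giving the $\sqrt{11\ln(2^{n+1}/\delta)/(NN_r)}$ scaling (the paper invokes Theorem~2 of Ref.~\cite{Kiss2023} rather than re-deriving it), followed by propagation to $\chi$, $\zeta$, $|\gamma_0^\prime|^2$ and the regularized generalized-eigenvalue bound (Eq.~50) of Ref.~\cite{kirby2024analysis}. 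The only notable difference is bookkeeping: the paper exploits the Toeplitz structure $\matH_{ij}=\langle\psi_0|U^{j-i}H|\psi_0\rangle$, $\matS_{ij}=\langle\psi_0|U^{j-i}|\psi_0\rangle$ so that each matrix element is perturbed through a single faulty unitary (and enforces Hermiticity of $\matS^\prime$ by construction), which is how it lands exactly on the stated constants, whereas your double-sided state perturbation would inflate them slightly without changing the form of the result.
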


We follow the notation and proof structure of Ref.~\cite{kirby2024analysis}, and specialize the argument to the case where time evolution is simulated with the qDRIFT algorithm.

Let 
\begin{equation}
    \matV = [\mathrm{e}^{-\mathrm{i}\frac{(d-1)}{2}Ht}\ket{\psi_0},\mathrm{e}^{-\mathrm{i}\frac{(d-3)}{2}Ht}\ket{\psi_0},\ldots,\mathrm{e}^{\mathrm{i}\frac{d-1}{2}Ht}\ket{\psi_0}]\,,
\end{equation}
and suppose that
\begin{equation}
    (\matH, \matS) = (\matV^\dagger H \matV, \matV^\dagger \matV)
\end{equation}
are approximated via qDRIFT yielding the faulty matrices $(\matH^\prime, \matS^\prime)$. Recall that 
\begin{align}
    \matH_{ij} & = \langle \psi_0 | U^i H U^j |\psi_0\rangle = \langle \psi_0 | U^{j-i} H |\psi_0 \rangle\,,\\
    \matS_{ij} &= \langle \psi_0 | U^{j-i}|\psi_0 \rangle\,,
\end{align}
where we used the fact that $U^i$ commutes with $H$ and where $-\frac{(d-1)}{2}\leq i,j\leq \frac{d-1}{2}$. 
Then the faulty matrices $(\matH^\prime, \matS^\prime)$ can be defined as
\begin{align}\label{eq:faulty_krylov}
    \matH^\prime_{ij} &= \sum_{\veck} p^{j-i}_{\veck} \bra{\psi_0} V^{j-i}_{\veck}H\ket{\psi_0} =  \bra{\psi_0} \mathbb{E}[V^{j-i}]H\ket{\psi_0}\,,\notag\\
    \matS^\prime_{ij} &= \sum_{\veck} p^{j-i}_{\veck} \bra{\psi_0} V^{j-i}_{\veck} \ket{\psi_0} = \bra{\psi_0} \mathbb{E}[V^{j-i}]\ket{\psi_0}\,.
\end{align}

Assume that $\matS^\prime$ is Hermitian, which can be enforced by using \cref{eq:faulty_krylov} to only calculate the elements on and above the diagonal, and obtaining the rest as conjugates of their transposes~\cite{kirby2024analysis}.
In this setting, we can apply the bound from Eq.~50 from Ref.~\cite{kirby2024analysis}, which requires upper bounding the quantities
\begin{equation}
    ||\matH - \matH^\prime|| \quad \text{and}\quad ||\matS-\matS^\prime||\,.
\end{equation}
We have that
\begin{align}
    ||\matS - \matS^\prime|| &\leq d||\matS - \matS^\prime||_{\max}=d\max_{i,j} |\matS_{ij}-\matS^\prime_{ij}|\\
    &\leq d\max_{i,j} ||U^{j-i}-\mathbb{E}[V^{j-i}]||\,,
\end{align}
where the last inequality uses that $||\ket{\psi_0}||=1$.
Then, using Proposition~3.2. from Ref.~\cite{Huang2021_qDRIFT-Concentration}, we get
\begin{equation}\label{eq:nonuniform}
    ||\matS - \matS^\prime|| \leq d \max_{-\frac{(d-1)}{2}\leq i,j\leq \frac{d-1}{2}} \frac{((j-i)t)^2\lambda^2}{N} =d\frac{(d-1)^2t^2\lambda^2}{N}\,.
\end{equation}

However, $\matS^\prime$ refers to the ideal qDRIFT channel with infinitely many randomizations. 
Let $\matS^{\prime\prime}$ be defined analogously replacing $\mathbb{E}[V^j]$ by $V^j$ and using a finite number of $N_r$ randomizations, i.e. the matrix that we can actually implement in practice.
Then, using Theorem~2 from Ref.~\cite{Kiss2023}, we get that with probability $1-\delta$,
\begin{equation}
    ||V^{j-i}-\mathbb{E}[V^{j-i}]||<t(j-i)\lambda\sqrt{\frac{11\ln(2^{n+1}/\delta)}{NN_r}}.
\end{equation}
Therefore, in practice we have that with probability at least $1-\delta$,
\begin{equation}
    ||\matS - \matS^{\prime\prime}||\leq d(d-1)t\lambda\left(\frac{t(d-1)\lambda}{N} + \sqrt{\frac{11\ln(2^{n+1}/\delta)}{NN_r}}\right)\eqqcolon \epsilon_Q\,.
\end{equation}
The bound for $||\matH-\matH^\prime||$ can be obtained similarly using that $|\bra{\psi_0}(U^j-\mathbb{E}[V^j]H\ket{\psi_0}|\leq ||U^j-\mathbb{E}[V^j]||\cdot ||H||$. 
Resulting in the bound, with probability at least $1-\delta$,
\begin{equation}
    ||\matH - \matH^{\prime\prime}|| \leq \epsilon_Q||H||\,.
\end{equation}

Next, using the notation from Ref.~\cite{kirby2024analysis}, we have
\begin{align}
    \chi&\coloneqq  ||\matH - \matH^\prime|| + ||\matS - \matS^\prime|| \cdot ||H|| \leq 2\epsilon_Q ||H||\,,\\
    \zeta &\coloneqq  2d(\epsilon_R + ||\matS - \matS^\prime||)\leq 2d(\epsilon_R + \epsilon_Q)\,,\\
    |\gamma^\prime_0|^2 &\coloneqq  |\gamma_0|^2 -2\epsilon_R -2||\matS - \matS^\prime|| \geq |\gamma_0|^2 - 2\epsilon_R - 2\epsilon_Q\,,
\end{align}
where the last inequality assumes $|\gamma_0|^2 \geq 2||\matS - \matS^\prime||$ and $\epsilon_R > 0$ denotes the regularization threshold.
The bounds hold with probability at least $1-\delta$ owing to the stochastic sampling inherent in qDRIFT.

\subsection{Proof of \texorpdfstring{\cref{thm:sqdrift_main}}{Theorem X}}\label{appendix:proof_sqdrift_main}
We follow closely the proof given in Ref.~\cite{Yu2025_SKQD} for SKQD.

\textbf{Step 1.} \cref{thm:KQD_QDRIFT} together with Lemma~1 from Ref.~\cite{Yu2025_SKQD} imply that
we can find a state $|\psi\rangle$ approximating the exact ground state as
\begin{equation}
    ||\ket{\psi} - \ket{\phi_0}||^2 \leq \tilde{\xi}= O\left(\frac{\xi}{\Delta E_1}\right)\,.
\end{equation}

\textbf{Step 2.} Next, Lemma~2 from Ref.~\cite{Yu2025_SKQD} tells us that if $\ket{\phi_0}$ exhibits $(\alpha_L^{(0)},\beta_L^{(0)})$ sparsity, then
    $\ket{\psi}$ is $(\alpha_L,\beta_L)$-sparse with
\begin{equation}
    \alpha_L = \alpha_L^{(0)}  -2\sqrt{\tilde{\xi}}\quad\text{and}\quad \beta_L = \beta_L^{(0)}  -2\sqrt{\tilde{\xi}}\,.
\end{equation}

\textbf{Step 3.} We write the ideal $k$-th Krylov state in the computational basis as $\ket{\psi^k}=\sum_{j=1}^N \sqrt{p^k(b_j)}\ket{b_j}$ for each $k=0,\ldots,d-1$.
Then, by Lemma~3 from Ref.~\cite{Yu2025_SKQD} we have that for each $1 \leq i\leq L$ there exists a Krylov state $k$ such that 
\begin{equation}\label{eq:pk_bj}
    |p^k(b_i)|\geq \frac{|\gamma_0|^2\beta_L}{d^2}\,.
\end{equation}
Therefore, in the remaining of the proof we usually omit the Krylov index $k$ and assume we are working with the state such that \cref{eq:pk_bj} holds, this in turn means relabeling
\begin{align}
    V^k_{\veck} \rightarrow V_{\veck}, \quad V^k \rightarrow V \quad \text{and}\quad p^k_{\veck}\rightarrow p_{\veck}\,.
\end{align}

\textbf{Step 4.} We now prove that for each $b_i$, with probability at least $1-\delta$, $|\sqrt{p(b_i)}-\sqrt{\tilde{p}_{\veck}(b_i)}|\leq \epsilon$, where 
\begin{equation}
    \epsilon = \frac{t^2\lambda^2}{N} + t\lambda\sqrt{\frac{11\ln(2^{n+1}/\delta)}{N}}\,,
    \label{eq:qdrift-epsilon_app}
\end{equation}
and where $\sqrt{\tilde{p}_{\veck}(b_i)}$ is the approximation of $\sqrt{p(b_i)}$ obtained for a sampled $V_{\veck}$. 

We first note that we can bound the error on the amplitudes $|\sqrt{p(b_i)}-\sqrt{\tilde{p}_{\veck}(b_i)}|$ by bounding on their respective operators
\begin{equation}
    |\sqrt{p(b_i)}-\sqrt{\tilde{p}_{\veck}(b_i)}| \leq ||\ket{\psi} - \ket{\tilde{\psi}_{\veck}}||   \leq ||U - V_{\veck}||\,.
\end{equation}

In the remaining, we will bound the different terms of the expression
\begin{equation}\label{eq:goal_bound}
    ||U - V_{\veck}|| \leq ||U - \mathbb{E}[V]|| + ||\mathbb{E}[V] - V_{\veck}||\,.
\end{equation}

\paragraph{First term}
From Proposition~3.2 in Ref.~\cite{Huang2021_qDRIFT-Concentration} (Eq.~95), we have that
\begin{equation}\label{eq:first_term}
    ||U - \mathbb{E}[V]||\leq \frac{t^2 \lambda^2 }{N}\,.
\end{equation}

\paragraph{Second term}
We now use Proposition~3.3 from Ref.~\cite{Huang2021_qDRIFT-Concentration}, and get that with probability $1-\delta$,
\begin{equation}\label{eq:third_term}
    ||V_{\veck}-\mathbb{E}[V]||<t\lambda\sqrt{\frac{11\ln(2^{n+1}/\delta)}{N}}\,.
\end{equation}

Putting all together, we can bound $|\sqrt{p(b_i)}-\sqrt{\tilde{p}_{\veck}(b_i)}|$ with probability $1-\delta$ as
\begin{equation}
    |\sqrt{p(b_i)}-\sqrt{\tilde{p}_{\veck}(b_i)}|\leq \frac{t^2\lambda^2}{N} + t\lambda\sqrt{\frac{11\ln(2^{n+1}/\delta)}{N}}\,.
\end{equation}

In turn, this implies that with probability at least $1-\delta$, $|\sqrt{\tilde{p}_{\veck}(b_i)}|\geq\frac{|\gamma_0|\sqrt{\beta_L}}{d} - \epsilon$. Therefore, 
\begin{equation}\label{eq:tildep_bound}
1-\tilde{p}_{\veck}(b_i)\leq1-\left(\frac{|\gamma_0|\sqrt{\beta_L}}{d} - \epsilon\right)^2\eqqcolon 1-p
\end{equation} 
with probability at least $1-\delta$.

\textbf{Step 5.} We can now bound the probability of missing the bitstring $b_i$ from $N_r$ qDRIFT realizations from which we take $S$ samples.
Note that while Ref.~\cite{Kiss2023} do not look into taking several samples per randomization, a qDRIFT channel with $N_r$ randomizations and $S$ samples per randomization is equivalent to their definition of the qDRIFT channel with $N_r$ individual experiments (Eq.~7 in~\cite{Kiss2023}):
\begin{equation}
    \frac{1}{N_r S}\sum_{l=1}^{N_r} \sum_{s=1}^S \left[V_{\veck_l}\rho V^\dagger_{\veck_l}\right]=\frac{1}{N_r}\sum_{l=1}^{N_r} \left[V_{\veck_l}\rho V^\dagger_{\veck_l}\right]=\mathcal{E}(t;N,N_r)\,.
\end{equation}
Let $X(b_i)$ denote the random variable that is $1$ if $b_i$ is sampled and $0$ otherwise, i.e.
\begin{equation}
    \mathbb{P}[X(b_i)=1] = \sum_{\veck} p_{\veck}\tilde{p}_{\veck}(b_i)\,,
\end{equation}
where ${\veck}=(k_1,\ldots,k_N)$ ranges over all possible randomizations and $p_{\veck}$ denotes the probability of sampling $V_{\veck}$.
Then, the probability of not measuring $b_i$ when taking $S$ samples from each of the $N_r$ randomizations is
\begin{equation}
    p_{\text{fail}}(b_i)\coloneqq \mathbb{P}[X(b_i)=0] = \left(\sum_{\veck}p_{\veck}\left(1-\tilde{p}_{\veck}(b_i)\right)^S\right)^{N_r}\,.
\end{equation}
Therefore, using \cref{eq:tildep_bound} and that $\sum_{\veck}p_{\veck}=1$, we have that
\begin{equation}
    p_{\text{fail}}(b_i) \leq \left((1-\delta)\left(1-p\right)^S + \delta\right)^{N_r}\,.
\end{equation}
Finally, applying the union bound over the $L$ important bitstrings, the overall probability failure incurs an extra factor of $L$. 

\textbf{Step 6.} We conclude applying Lemma~5 from Ref.~\cite{Yu2025_SKQD}. The state $|\tilde{\phi}\rangle = (1/C)\sum_{j=0}^{L-1}a_j|b_j\rangle$ with $C = \sqrt{\sum_{j=0}^{L-1}|a_j|^2}$, representing the restriction on the $L$-dimensional basis of important bitstrings of the target ground state defined on the full $n$-qubit Hilbert space, satisfies
\begin{equation}
    \langle \tilde{\phi}|H |\tilde{\phi}\rangle - \langle \phi_0|H |\phi_0\rangle \leq \sqrt{8}||H||\left(1-\sqrt{\alpha_L^{(0)}}\right)^{1/2}\,.
\end{equation}
Therefore, it represents a valid solution to the ground state approximation problem.

\begin{figure*}
\subfloat[\label{fig:convergence-fciqmc}]{%
\includegraphics{Plots/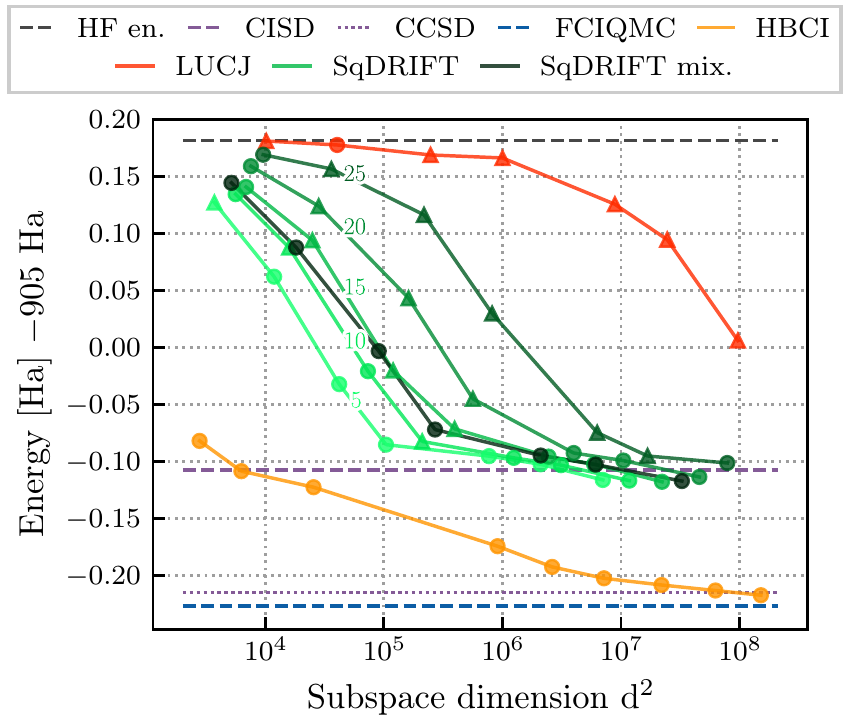}
}
  \subfloat[\label{fig:COR-STO3G}]{%
\includegraphics{Plots/fciqmc_convergence.pdf}
}
  \caption{\textbf{Energy of coronene as a function of the subspace diagonalization dimension and convergence of the FCIQMC calculation}. (a) We repeat the calculations of \cref{fig:Coronene} with a STO-3G basis set. We plot the absolute energy values obtained from different approximate methods.
  The following classical computations serve as references: Hartree-Fock (black solid; $-904.819~\mathrm{Ha}$), CISD (black dashed; $905.108~\mathrm{Ha}$), CCSD (black dash-dotted; $-905.215~\mathrm{Ha}$), FCIQMC (cyan solid; $-905.227~\mathrm{Ha}$), HCI (red; min: $-905.218~\mathrm{Ha}$).
  For LUCJ (yellow; min: $904.994~\mathrm{Ha}$) and SqDRIFT (shades of green; min: $-905.118~\mathrm{Ha}$) we plot the minimum energy values obtained throughout $2$ iterations of SQD post-processing including configuration recovery (CR).
  Triangles are indicative of the minimum energy obtained \emph{with} CR, circles indicate the minimum energy obtained \emph{without} CR. The light gray curve (min: $-905.117~\mathrm{Ha}$) was obtained by mixing the samples from all other SqDRIFT circuits. (b) For the convergence of the FCIQMC calculation, the energy is expressed as a function of the number of walkers, while the solutions obtained with HCI (+ PT2 correction) are represented by horizontal lines, with the size of the corresponding subspace indicated above each line.
  }
  \label{fig:Coronene-Appendix}
\end{figure*}

\section{Full Configuration Interaction Monte Carlo calculations}

We report in \cref{fig:convergence-fciqmc} the convergence of the FCIQMC calculation with the number of walkers.
The simulations were run within the adaptive-shift variant~\cite{Ghanem2020_AdaptiveShift-FCIQMC}, using an offset of $0.2$. These calculations yielded an energy of $-905.227(1)~\mathrm{Ha}$. Truncated CIQMC calculations at the CISDTQ and CISDTQ56 level yield $-905.1974~\mathrm{Ha}$ and $-905.222(1)~\mathrm{Ha}$ respectively. Even with hexatuple excitations, therefore, chemical accuracy (i.e. within $1.6~\mathrm{Ha}$) is not achieved. A slow convergence with excitation level is likely an indication of size-inconsistency error, which is a major challenge for truncated CI methods in relatively extended systems, even if they are not strongly correlated. This is corroborated by coupled-cluster calculations, which yield quite rapid convergence to the estimated FCI result: CCSD: $-905.21508~\mathrm{Ha}$, CCSDT: $-905.23194~\mathrm{Ha}$, CCSDTQ: $-905.22771~\mathrm{Ha}$, CCSDTQP: $-905.22732~\mathrm{Ha}$. Note, however, that the convergence of CC series is not monotonic. It is also noteworthy that the HCI with $10^8$ determinant variational space matches only the CCSD energy, again another indication of the slow convergence of the energy with size of variational space.

Notably, the energy obtained using FCIQMC with $1$ millions determinant is lower than that obtained with HCI with approximately $100$ millions determinants.
When comparing HCI and FCIQMC results, it should however be recalled that subspace size and number of walkers are not equivalent.
In fact, the number of unique determinants occupied instantaneously in FCIQMC is usually lower than the number of walkers, as many walkers will be associated to the reference determinants.
Hence, the subspace dimension of FCIQMC is effectively lower than the number of walkers.
Still, the FCIQMC energy is lower than the HCI one because, due to the stochastic walker fluctuation that is inherent to FCIQMC, the latter method (partially) captures dynamical correlation effects. For this reason, FCIQMC yields a lower energy estimate.

\section{Computational details}

The following software packages have been used in this work.
Qiskit SDK v2.1.0~\cite{Qiskit2024} for the implementation of the qDRIFT sampling and generation of the quantum circuits.
\texttt{rustiq} v0.0.8~\cite{Debrugiere2024_rustiq} was used to optimize the depth of the quantum circuits.
\texttt{qiskti-addon-sqd} v0.11.0~\cite{qiskit-addon-sqd} in combination with the latest development version of \texttt{qiskit-addon-dice-solver} (git commit: \texttt{785f8b4})~\cite{qiskit-addon-dice-solver} leveraging a development version of \texttt{DICE} (git commit: \texttt{3198db1})~\cite{Holmes2016_HBCI, holmes2016efficient, Sharma2017_Semistochastic} were used for the SQD post-processing step.
The quantum circuits were executed on the \texttt{ibm\_aachen} Heron r2 chip by IBM Quantum providing a total of 156 superconducting qubits.
We report statistics on hardware resources of the randomized SqDRIFT circuits executed on hardware in \cref{tab:hw_nap,tab:hw_cor} and \cref{fig:two_qubit_hists}. For both cases, the two-qubit gate count increases with the excitation number. Given typical two-qubit error rates, circuits with more than $\sim 10^3$ two-qubit gates enter a regime where sampling quality is dominated by hardware noise. Consistent with this, we observe a pronounced degradation from $>15$ excitations, in line with the results in \cref{fig:Coronene}. Over the same range, SqDRIFT uses the available depth budget more efficiently than the LUCJ ansatz, achieving similar expressivity at lower depth and therefore reduced noise accumulation.
The hardware experiments were managed via the quantum spank plugin for Slurm v0.1.0~\cite{Sitdikov2025_quantum}.
For all instances of the LUCJ ansatz, the circuits are constructed by fixing the parameters to those obtained from the $t_1$ and $t_2$ amplitudes of a prior CCSD calculation of the respective system run with \texttt{ffsim} v0.0.57~\cite{ffsim}.

HF and CISD calculation were run using the PySCF package~\cite{Sun2020_PySCF} (version 2.4.0).
The active orbitals were selected using the AVAS protocol~\cite{Sayfutyarova2017_AVAS}, taking the $p_z$ orbitals as the reference atomic orbitals.
For both naphthalene and coronene, all calculations use the ccpVDZ basis set. For completeness, we show in \cref{fig:COR-STO3G} the same calculations ran for coronene for a STO-3G basis set.
HCI calculations were run using the default configuration of \texttt{DICE}.
The reference geometry of naphthalene was taken from Ref.~\cite{Ghosh2017_GAS-pDFT}, and the one of coronene was taken from Ref.~\cite{Blunt2021_FCIQMC-Acene}.

\begin{figure*}
\subfloat[Naphthalene circuits\label{fig:2q_count_nap}]{%
\includegraphics{Plots/2q_count_nap.pdf}
}
\subfloat[Coronene circuits\label{fig:2q_count_cor}]{%
\includegraphics{Plots/2q_count_cor.pdf}
}
\caption{\textbf{Histogram of the number of two-qubit gates for the hardware experiments of (a) naphthalene and (b) coronene.} The statistics of these histograms are summarized on \cref{tab:hw_nap} and \cref{tab:hw_cor}.}
\label{fig:two_qubit_hists}
\end{figure*}

\begin{table}[t]
    \subfloat[Two-qubit gate depths]{
        \begin{tabular}{cccccccc}
            \toprule
            \# exc. & mean & std. dev. & min & $25\%$ & median & $75\%$ & max \\
            \midrule
            10 & 69.40 & 68.74 & 2 & 23.00 & 45.50 & 96.25 & 450 \\
            15 & 140.22 & 117.45 & 4 & 51.75 & 104.00 & 199.25 & 684 \\
            25 & 354.71 & 221.90 & 19 & 186.00 & 315.50 & 490.00 & 1200 \\
            \bottomrule
        \end{tabular}
    }
    \subfloat[Two-qubit gate counts]{
        \begin{tabular}{cccccccc}
            \toprule
            \# exc. & mean & std. dev. & min & $25\%$ & median & $75\%$ & max \\
            \midrule
            10 & 112.85 & 116.37 & 4 & 35.00 & 72.00 & 152.00 & 855 \\
            15 & 246.46 & 217.23 & 12 & 91.75 & 174.00 & 349.25 & 1420 \\
            25 & 691.73 & 467.31 & 49 & 341.00 & 591.00 & 928.25 & 2499 \\
            \bottomrule
        \end{tabular}
    }
    \caption{\textbf{Statistics of the $1000$ randomized SqDRIFT circuits (for $k=1$; these results are independent of $k$) executed on hardware for naphthalene.}
    The columns indicate the number of excitations included in each randomized circuit followed by the mean, standard deviation, minimum, $25\%$ percentile, median, $75\%$ percentile, and maximum values of the (a) two-qubit gate depth and (b) two-qubit gate count, respectively.
    }
    \label{tab:hw_nap}
\end{table}

\begin{table}[t]
    \subfloat[Two-qubit gate depths]{
        \begin{tabular}{cccccccc}
            \toprule
            \# exc. & mean & std. dev. & min & $25\%$ & median & $75\%$ & max \\
            \midrule
            5 & 36.34 & 32.51 & 2 & 15.00 & 25.00 & 52.00 & 210 \\
            10 & 102.46 & 85.65 & 2 & 39.75 & 76.00 & 141.25 & 513 \\
            15 & 227.05 & 174.65 & 14 & 92.75 & 176.00 & 313.00 & 1258 \\
            20 & 440.38 & 317.15 & 16 & 196.00 & 372.00 & 591.00 & 2373 \\
            25 & 777.01 & 526.23 & 51 & 391.00 & 665.00 & 1044.25 & 3550 \\
            \bottomrule
        \end{tabular}
    }
    \subfloat[Two-qubit gate counts]{
        \begin{tabular}{cccccccc}
            \toprule
            \# exc. & mean & std. dev. & min & $25\%$ & median & $75\%$ & max \\
            \midrule
            5 & 59.93 & 51.03 & 2 & 25.00 & 45.00 & 80.00 & 336 \\
            10 & 200.38 & 165.84 & 6 & 85.00 & 148.00 & 264.25 & 1276 \\
            15 & 507.29 & 416.14 & 34 & 210.00 & 380.50 & 664.00 & 3746 \\
            20 & 1086.24 & 869.70 & 62 & 451.00 & 841.50 & 1435.75 & 6324 \\
            25 & 2089.30 & 1583.73 & 137 & 941.25 & 1661.00 & 2856.00 & 10423 \\
            \bottomrule
        \end{tabular}
    }
    \caption{\textbf{Statistics of the $1000$ randomized SqDRIFT circuits (for $k=1$; these results are independent of $k$) executed on hardware for coronene.}
    The columns indicate the number of excitations included in each randomized circuit followed by the mean, standard deviation, minimum, $25\%$ percentile, median, $75\%$ percentile, and maximum values of the (a) two-qubit gate depth and (b) two-qubit gate count, respectively.
    }
    \label{tab:hw_cor}
\end{table}

\end{document}